\documentclass[1p]{elsarticle}

\journal{TCS}
\usepackage{amsfonts,epsfig,amsmath, amssymb, tabularx}
\usepackage{algorithm}
\usepackage{algpseudocode}
\usepackage{framed}
\usepackage{tabularx}
\usepackage{textgreek}
\usepackage{stmaryrd, subcaption, graphicx}

\newtheorem{theorem}{Theorem}[section]

\newtheorem{corollary}[theorem]{Corollary}
\newtheorem{lemma}[theorem]{Lemma}
\newtheorem{definition}[theorem]{Definition}
\newtheorem{property}[theorem]{Property}

\newtheorem{observation}[theorem]{Observation}
\newenvironment{proof}{\noindent{\bf Proof~}}{\null\hfill $\Box$\par\medskip}


\setcounter{topnumber}{10}
\setcounter{bottomnumber}{10}
\setcounter{totalnumber}{10}

\newcolumntype{L}{>{\raggedright\arraybackslash}X}
\newcolumntype{R}{>{\raggedleft\arraybackslash}X}
\newcolumntype{C}{>{\centering\arraybackslash}X}

\newcommand{\problemDecision}[3]
{
 \begin{framed}
   \centerline{{\sc #1}}     
   \bigskip
   \noindent
   \begin{tabularx}{\textwidth}{lX}
     \emph{Given:} & #2	\\
     \emph{Question:} & #3
   \end{tabularx}
 \end{framed}
}

\newcommand{\mdSliceBC}{
  $k$-bounded BC Metric Dimension
}

\pdfoutput=1

\begin{document}

\begin{frontmatter}
\title{Computing the metric dimension by decomposing graphs into extended biconnected components}
\author[Duesseldorf]{Duygu Vietz} 
\ead{Duygu.Vietz@hhu.de}
\author[Duesseldorf]{Stefan Hoffmann}
\ead{Stefan.Hoffmann@hhu.de}
\author[Duesseldorf]{Egon Wanke} 
\ead{E.Wanke@hhu.de}
\address[Duesseldorf]{Heinrich-Heine-Universit\"at D\"usseldorf, Institute of Computer Science, 40225 D\"usseldorf, Germany}

\begin{abstract}
A vertex set $U \subseteq V$ of an undirected graph $G=(V,E)$ is a \textit{resolving set} for $G$, if for every two distinct vertices $u,v \in V$ there is a vertex $w \in U$ such that the distances between $u$ and $w$ and the distance between $v$ and $w$ are different. The \textit{Metric Dimension} of $G$ is the size of a smallest resolving set for $G$. Deciding whether a given graph $G$ has Metric Dimension at most $k$ for some integer $k$ is well-known to be NP-complete. Many research has been done to understand the complexity of this problem on restricted graph classes. In this paper, we decompose a graph into its so called \textit{extended biconnected components} and present an efficient algorithm for computing the metric dimension for a class of graphs having a minimum resolving set with a bounded number of vertices in every extended biconnected component. Further we show that the decision problem {\sc Metric Dimension} remains NP-complete when the above limitation is extended to usual biconnected components.
\end{abstract}

\begin{keyword}
Graph algorithm, Complexity, Metric Dimension, Resolving Set, Biconnected Component
\end{keyword}
\end{frontmatter}


\section{Introduction}
\label{secIntro}

An undirected graph $G=(V,E)$ has metric dimension at most $k$ if there is a vertex set $U \subseteq V$ such that $|U| \leq k$ and $\forall u,v \in V$, $u \not=v$, there is a vertex $w \in U$ such that $d_G(w,u) \not= d_G(w,v)$, where $d_G(u,v)$ is the distance (the length of a shortest path in an unweighted graph) between $u$ and $v$. The metric dimension of $G$ is the smallest integer $k$ such that $G$ has metric dimension at most $k$. The metric dimension was independently introduced by Harary and Melter \cite{HM76} and Slater \cite{Sla75}.

If for three vertices $u,v,w$, we have $d_G(w,u) \not= d_G(w,v)$, then we say that $u$ and $v$ are {\em resolved} by vertex $w$. If every pair of vertices is resolved by at least one vertex of a vertex set $U$, then $U$ is a {\em resolving set} for $G$. The {\em metric dimension} of $G$ is the size of a minimum resolving set. Such a smallest resolving set is also called a {\em resolving basis} for $G$. In certain applications, the vertices of a resolving set are also called {\em resolving vertices}, {\em landmark nodes} or {\em anchor nodes}. This is a common naming particularly in the theory of sensor networks. 

Determining the metric dimension of a graph is a problem that has an impact on multiple research fields such as chemistry \cite{CEJO00}, robotics \cite{KRR96}, combinatorial optimization \cite{ST04} and sensor networks \cite{HW12}. Deciding whether a given graph $G$ has metric dimension at most $k$ for a given integer $k$ is known to be NP-complete for general graphs \cite{GJ79}, planar graphs \cite{DPSL12}, even for those with maximum degree 6 and Gabriel unit disk graphs \cite{HW12}. Epstein et al.\ showed the NP-completeness for split graphs, bipartite graphs, co-bipartite graphs and line graphs of bipartite graphs \cite{epstein2015weighted} and Foucaud et al.\ for permutation and interval graphs \cite{foucaud2015algorithms}\cite{foucaud2017identification}.

There are several algorithms for computing the metric dimension in polynomial time for special classes of graphs, as for example for trees \cite{CEJO00,KRR96}, wheels \cite{HMPSCP05}, grid graphs \cite{MT84}, $k$-regular bipartite graphs \cite{SBSSB11}, amalgamation of cycles \cite{IBSS10}, outerplanar graphs \cite{DPSL12}, cactus block graphs \cite{hoffmann2016linear} and chain graphs \cite{fernau2015computing}. The approximability of the metric dimension has been studied for bounded degree, dense, and general graphs in \cite{HSV12}. Upper and lower bounds on the metric dimension are considered in \cite{CGH08,CPZ00} for further classes of graphs.

There are many variants of the Metric Dimension problem. For the weighted version Epstein et al.\ gave a polynomial-time algorithm on paths, trees and cographs \cite{epstein2015weighted}. Hernando et al.\ investigated the fault-tolerant Metric Dimension in \cite{hernando2008fault}, Estrada-Moreno et al.\ the $k$-metric Dimension in \cite{estrada2013k} and Oellermann et al.\ the strong metric Dimension in \cite{oellermann2007strong}.
 
The parameterized complexity was investigated by Hartung and Nichterlein. They showed that for the standard parameter the problem is $W[2]$-complete on general graphs, even for those with maximum degree at most three \cite{hartung2013parameterized}. Foucaud et al.\ showed that for interval graphs the problem is FPT for the standard parameter \cite{foucaud2015algorithms}\cite{foucaud2017identification}. Afterwards Belmonte et al.\ extended this result to the class of graphs with bounded treelength, which is a superclass of interval graphs and also includes chordal, permutation and AT-free graphs \cite{belmonte2017metric}. 
  
\bigskip 
In this paper, we introduce a concept that allows us to compute the metric dimension based on a tree structure given by the decomposition of a graph $G$ into components like {\em bridges}, {\em legs}, and so-called {\em extended biconnected components}. An {\em extended biconnected component} $H$ of $G$ is an induced subgraph of $G$ formed by a biconnected component $H'$ of $G$ extended by paths attached to vertices of the biconnected component $H'$. Each vertex of $H'$ has at most one path attached to it. Each vertex at which a path is attached is a separation vertex in $G$ and not adjacent to any vertex outside of extended biconnected component $H$. The idea of such a decomposition leads to a polynomial time solution for the Metric Dimension problem restricted to graphs having a minimum resolving set with a bounded number of vertices in every extended biconnected component. This result is especially noteworthy, because we also show that the decision problem {\sc Metric Dimension} remains NP-complete if the above limitation is extended to usual biconnected components.

\section{Definitions and Basic Terminology}

We consider {\em graphs} $G=(V,E)$, where $V$ is the set of {\em vertices} and $E$ is the set of {\em edges}. We distinguish between {\em undirected graphs} with edge sets $E \subseteq \{\{u,v\}~|~u,v \in V,~u \not=v\}$ and {\em directed graphs} with edge sets $E \subseteq V \times V.$ Graph $G'=(V',E')$ is a {\em subgraph} of $G=(V,E)$ if $V' \subseteq V$ and $E' \subseteq E$. It is an {\em induced subgraph} of $G$, denoted by $G|_{V'}$, if $E' = E \cap \{\{u,v\}~|~ u,v \in V'\}$ or $E' = E \cap (V'\times V')$, respectively.

A sequence of $k+1$ vertices $(u_1,\ldots,u_{k+1})$, $k \geq 0$, $u_i \in V$ for $i=1, \ldots , k+1$, is an {\em undirected path of length $k$}, if $\{u_i,u_{i+1}\} \in E$ for $i=1,\ldots,k$. The vertices $u_1$ and $u_{k+1}$ are the {\em end vertices} of undirected path $p$. The sequence $(u_1,\ldots,u_{k+1})$ is a {\em directed path of length $k$}, if $(u_i,u_{i+1}) \in E$ for $i=1,\ldots,k$. Vertex $u_1$ is the start vertex and vertex $u_{k+1}$ is the end vertex of the directed path $p$. A path $p$ is a {\em simple path} if all vertices are mutually distinct.

An undirected graph $G$ is {\em connected} if there is a path between every pair of vertices. The {\em distance} $d_G(u,v)$ between two vertices $u,v$ in a connected undirected graph $G$ is the smallest integer $k$ such that there is a path of length $k$ between $u$ and $v$. A {\em connected component} of an undirected graph $G$ is a connected induced subgraph $G'=(V',E')$ of $G$ such that there is no connected induced subgraph $G''=(V'',E'')$ of $G$ with $V'\subseteq V''$ and $|V'|<|V''|$. A vertex $u \in V$ is a {\em separation vertex} of an undirected graph  $G$ if $G|_{V \setminus \{u\}}$ (the subgraph of $G$ induced by $V \setminus \{ u \} $) has more connected components than $G$. Two paths $p_1=(u_1,\ldots,u_k)$ and $p_2=(v_1,\ldots,v_l)$ are {\em vertex-disjoint} if $\{u_2,\ldots,u_{k-1}\} \cap \{v_2\ldots,v_{l-1}\} = \emptyset$. A graph $G=(V,E)$ with at least three vertices is {\em biconnected}, if for every vertex pair $u,v \in V$, $u \not= v$, there are at least two vertex-disjoint paths between $u$ and $v$. A {\em biconnected component} $G'=(V',E')$ of $G$ is an induced biconnected subgraph of $G$ such that there is no biconnected induced subgraph $G''=(V'',E'')$ of $G$ with $V' \subseteq V''$ and $|V'|<|V''|$.

\begin{definition}[Resolving set]
Let $G=(V,E)$ be a connected undirected graph. A vertex set $R \subseteq V$ is a {\em resolving set} for $G$ if for every vertex pair $u,v \in V,$ $u \not = v$, there is a vertex $w \in R$ such that $d_G(u,w) \not= d_G(v,w)$. 
The set $R$ is a {\em minimum resolving set for $G$}, if there is no resolving set $R'\subseteq V$ for $G$ with $|R'| < |R|$.  A connected undirected graph $G=(V,E)$ has {\em metric dimension} $k\in\mathbb{N}$ if $k$ is the smallest positive integer such that there is a resolving set for $G$ of size $k$.
\end{definition}

\begin{definition}
\label{D01}
Let $G=(V,E)$ be a connected undirected graph.
\begin{enumerate}
\item {\bf (leg, root, leaf, hooked leg, ordinary leg)}
A path $p = (u_1,\ldots,u_k)$, $k \ge 2$, of $G$ is a {\em leg}, if vertex $u_1$ has degree one, the vertices $u_{2},\ldots,u_{k-1}$ have degree $2$, and vertex $u_k$ has degree $\geq 3$ in $G$.
Vertex $u_k$ is called the {\em root} of $p$. Vertex $u_1$ is called the {\em leaf} of $p$.
A leg is called a {\em hooked leg}, if the removal of its root separates $G$ into exactly two connected components, i.e. the edges at root $u_k$ without edge $\{u_{k-1},u_k\}$ belong to exactly one biconnected component.
A leg is called an {\em ordinary leg}, if it is not a hooked leg, i.e. if the removal of its root separates $G$ into more than two connected components.
\item {\bf (bridge)}
An edge $e \in E$ is called a {\em bridge} if $H=(V,E \setminus \{e\})$ is not connected and if $e$ is not an edge between two vertices of one and the same leg.
\item {\bf (extended biconnected component (EBC))}
A {\em biconnected component} $H=(V_H,E_H)$ of $G$ extended by the subgraphs of $G$ induced by the vertices of the hooked legs with roots in $V_H$ is an {\em extended biconnected component (EBC)} of $G$.
\item {\bf (component)} Every subgraph induced by the vertices of an ordinary leg, every subgraph induced by the two vertices of a bridge, and every EBC is called a {\em component} of $G$.
\item {\bf (amalgamation vertex)} Separation vertices of $G$ that belong to at least two components, i.e. separation vertices without the degree two vertices of the legs and roots of the hooked legs are called {\em amalgamation vertices}.

\end{enumerate}  
\end{definition}

Every undirected graph $G=(V,E)$ can be decomposed into legs, bridges and EBCs. This decomposition is a unique and edge-disjoint partition of $G$.

\begin{figure}[]  
\center 
\includegraphics[width=0.9\textwidth]{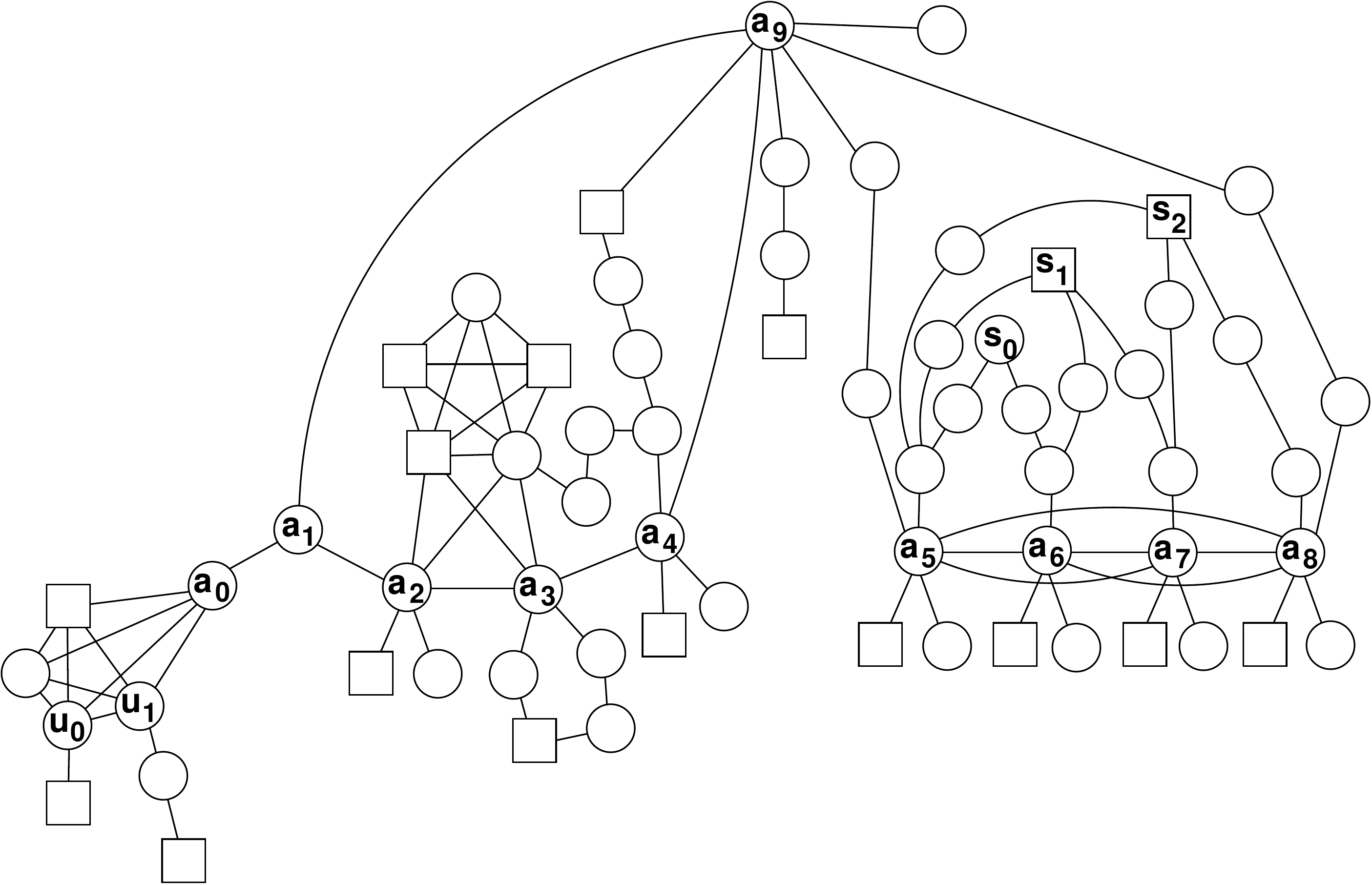}
\caption{Graph $G=(V,E)$ with ten amalgamation vertices ($a_0,\ldots ,a_9$), two hooked legs (at roots $u_0$ and $u_1$), $14$ ordinary legs (two legs at each of the roots $a_2$, $a_4$, $a_5$, $a_6$, $a_7$, $a_8$ and $a_9$), one bridge ($\{a_0,a_1\}$) four EBCs and $19$ components.  The set of vertices that are drawn as squares is a minimum resolving set for $G$. See Figure \ref{fullExampleTree} for a DEBC-tree of $G$ with root $a_9$.}
\label{fullExampleGraph}
\end{figure}

\begin{definition}[EBC-tree]
Let $G=(V,E)$ be a connected undirected graph.
\begin{enumerate}

\item The {\em EBC-tree} $T=(V_T,E_T)$ for $G$ is a tree with two types of nodes called {\em c-nodes} (nodes for the components of $G$) and {\em a-nodes} (nodes for the amalgamation vertices of $G$). $T$ has a c-node for every component of $G$. The vertex set of the corresponding component of $G$ represented by a c-node $c$ is denoted by ${\cal V}(c)$. $T$ has an a-node for every amalgamation vertex of $G$. The amalgamation vertex represented by a-node $a$ is denoted by \textnu $(a)$. 
Let $V_c$ be the set of c-nodes and $V_a$ be the set of a-nodes of $T$. Then $V_T=V_c \cup V_a$ and $E_T$ is the set of all edges $\{c,a\}$ with $c \in V_c$, $a \in V_a$ and \textnu $(a) \in {\cal V}(c) $.
\end{enumerate}
\end{definition}

Note that in the EBC-tree all leaves are c-nodes and there is no edge between two a-nodes and no edge between two c-nodes. All ordinary legs are represented by leaves, all bridges are represented by inner c-nodes, and all EBCs are represented by leaves or inner c-nodes.

\begin{definition}[DEBC-tree]
Let $G=(V,E)$ be a connected undirected graph.

\begin{enumerate}
 
\item For the EBC-tree $T=(V_T,E_T)$ for $G$ and a node $r\in V_T$ let $\overrightarrow{T} := (V_T,\overrightarrow{E}_T)$ be the {\em directed EBC-tree (DEBC-tree) with root $r$} that is defined as follows: $\overrightarrow{E_T}$ contains exactly one directed edge for every undirected edge of $E_T$ such that for every node $u \in V_T$ there is a directed path to root $r$, i.e. all edges are directed from the leafs towards the root.

\item For a node $u \in V_T$, let $\overrightarrow{T}(u)$ be the subtree of $\overrightarrow{T}$ induced by all nodes $v$ for which there is a directed path from $v$ to $u$ in $\overrightarrow{T}$. The root of $\overrightarrow{T}(u)$ is $u$.

\item For a subtree $\overrightarrow{T}(u)$ let $V_T'$ be the set of c-nodes of $\overrightarrow{T}(u)$ and
${\cal V}(V_T') :=  \bigcup_{v \in V_T'}{\cal V}(v)$. Then $G[u] := G|_{{\cal V}(V_T')}$ is the subgraph of $G$ induced by the vertices of ${\cal V}(V_T')$.
\end{enumerate}
\end{definition}

$G[u]$ is the subgraph of $G$ represented by $\overrightarrow{T}(u)$. It is not necessary to refer to the a-nodes of $\overrightarrow{T}(u)$, because the vertices of $G$ that are represented by the a-nodes are also represented by the c-nodes since for every a-node $a$ there is a c-node $c$ such that \textnu $(a) \in {\cal V}(c) $.\\
Note that the EBC-tree and the DEBC-tree of $G$ can be constructed in linear time with the help of any linear time algorithm for finding the biconnected components and bridges of $G$.

\begin{figure}[]
\center
\includegraphics[height= 0.45 \textheight]{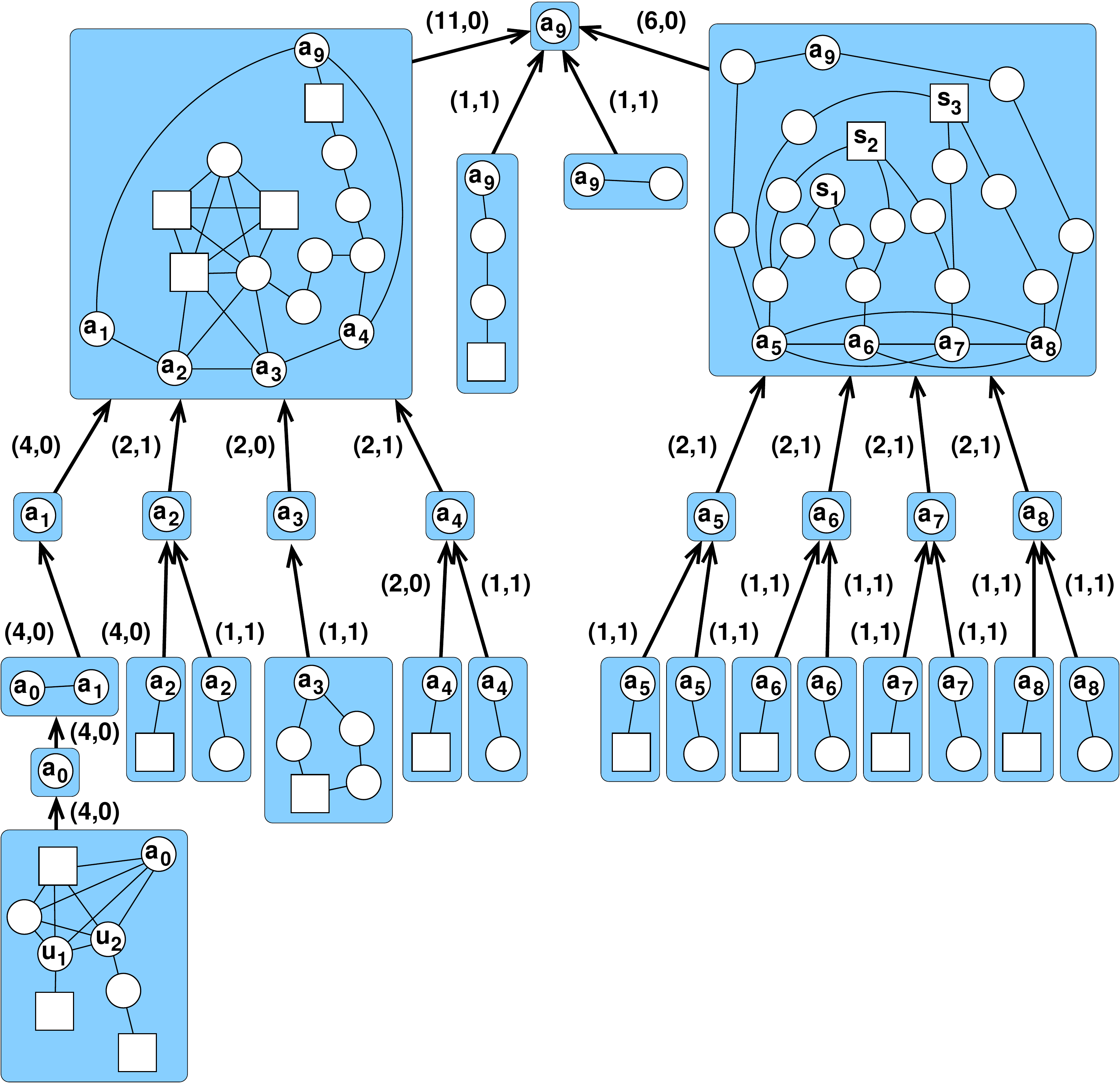} 
\caption{A DEBC-tree $\protect\overrightarrow{T} = (V_T,\protect\overrightarrow{E}_T)$ at root $a_9$ for a graph $G$ with 10 amalgamation vertices $a_0,\ldots ,a_9$, two hooked legs (at roots $u_0$ and $u_1$), $14$ ordinary legs (two legs at each of the roots $a_2$, $a_4$, $a_5$, $a_6$, $a_7$, $a_8$ and $a_9$), one bridge ($\{a_0,a_1\}$) four EBCs and $19$ components.  The vertices that are drawn as squares build a minimum resolving set for $G$. The vertices of $\protect\overrightarrow{T}$ ($19$ c-nodes and 10 a-nodes) are drawn as blue boxes, the directed edges as black arrows. For a c-node $c$ the blue box for $c$ contains the subgraph of $G$ induced by ${\cal V}(c)$ and for an a-node $a$ the blue box for $a$ contains the vertex \textnu $(a)$ of $G$.}
\label{fullExampleTree}
\end{figure}

\section{Computing the metric dimension based on a graph decomposition}

Without loss of generality we will use from now on the following assumptions: 

\begin{enumerate}
 \item $G=(V,E)$ is a connected undirected, but not biconnected graph.
 \item $\overrightarrow{T}= (V_T, \overrightarrow{E}_T)$ is the DEBC-tree for $G$ with root $r$.  \item $V_a$ is the set of a-nodes of $\overrightarrow{T}$ and $V_c$ is the set of c-nodes of $\overrightarrow{T}$.
 \item Root $r \in V_a$ is an a-node.
 \item Root $r$ has at least two children (because $G$ is not biconnected).
\end{enumerate}

First we will describe the general idea of how to compute the metric dimension of $G$. The idea is based on dynamic programming.

\begin{property}\label{properties}
For every subtree $\overrightarrow{T}(v)$, $v \in V_T$, of $\overrightarrow{T}$ we compute an information $h(v)$ satisfying the following properties:
\begin{enumerate}
  \item For every a-node $a \in V_a$ with children $c_1, \ldots, c_k \in V_c$, $k \geq 1$, the information $h(a)$ can efficiently be computed from $h(c_1), \ldots ,h(c_k)$.
  \item For every c-node $c \in V_c$ with children $a_1, \ldots, a_k \in V_a$, $k \geq 0$, the information $h(c)$ can efficiently be computed from $h(a_1), \ldots, h(a_k)$ and $G_{|{\cal V}(c)}$.
  \item The metric dimension of $G[r]$ can efficiently be computed from $h(r)$. 
\end{enumerate}
\end{property}

These properties allow an efficient bottom-up processing of $\overrightarrow{T}$ as follows: 
We start by computing $h(c)$ for every leaf $c$ of $\overrightarrow{T}$. Since the leafs are c-nodes without children we only need the subgraph $G_{|{\cal V}(c)}$ of $G$. For every inner a-node $a$ with children $c_1, \ldots ,c_k \in V_c$ we compute $h(a)$ from $h(c_1), \ldots ,h(c_k)$. For this we don't need any information from $G$. For every inner c-node $c$ with children $a_1, \ldots, a_k \in V_a$ we compute $h(c)$ from $h(a_1), \ldots, h(a_k)$ and additionally $G_{|{\cal V}(c)}$. Finally we compute the metric dimension of $G$ from $h(r)$.
 
Before we define $h(v)$ we need a few more definitions.

\begin{definition}[Gate Vertex] 
\label{GateVertex}
Let $A \subseteq V$ be a set of vertices. A vertex $v \in V$ is an {\em $A$-gate} of $G$, if there is a vertex $u \in V \setminus \{v\}$, such that for all $w \in A$ the equation $d_G(u,w)=d_G(u,v)+d_G(v,w)$ holds.
Vertex $u$ is called an {\em out-vertex} for $A$-gate $v$.
\end{definition}

Intuitively this definition means that $v$ is an $A$-gate if there is a vertex $u \in V \setminus \{v\}$ that has a shortest path to any vertex $a \in A$ that passes $v$.

\begin{observation}\label{adjOut}
 Let $A \subseteq V$ and $v \in V$ be an $A$-gate of $G$. Then there is an out-vertex $u \in V$ adjacent to $v$.
\end{observation}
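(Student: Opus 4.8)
The plan is to start from an arbitrary out-vertex and ``slide'' it along a shortest path towards $v$ until it becomes a neighbour of $v$, checking that the out-vertex property survives the last step.

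First I would fix an out-vertex $u \in V \setminus \{v\}$ for the $A$-gate $v$, which exists by the definition of $A$-gate, and pick a shortest path $u = x_0, x_1, \ldots, x_\ell = v$ in $G$, where $\ell = d_G(u,v) \ge 1$. Let $u' := x_{\ell-1}$ be the vertex immediately preceding $v$ on this path. If $\ell = 1$ then $u' = u$ is already adjacent to $v$ and we are done, so assume $\ell \ge 2$; then $u' \ne v$, and since $\{u',v\} \in E$ we have $d_G(u',v) = 1$. Moreover, every prefix of a shortest path is again a shortest path, so $d_G(u,u') = \ell - 1$.

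Next I would verify that $u'$ is an out-vertex for the $A$-gate $v$, i.e.\ that $d_G(u',w) = d_G(u',v) + d_G(v,w) = 1 + d_G(v,w)$ for every $w \in A$. The inequality $d_G(u',w) \le 1 + d_G(v,w)$ is just the triangle inequality. For the reverse direction I would combine the defining equation for $u$, namely $d_G(u,w) = d_G(u,v) + d_G(v,w) = \ell + d_G(v,w)$, with the triangle inequality $d_G(u,w) \le d_G(u,u') + d_G(u',w) = (\ell - 1) + d_G(u',w)$; cancelling $\ell - 1$ yields $d_G(u',w) \ge d_G(v,w) + 1$. Hence equality holds, and $u'$ is an out-vertex for $v$ that is adjacent to $v$, as claimed.

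There is essentially no serious obstacle in this argument: the only points needing a little care are the degenerate case $\ell = 1$ (where the given out-vertex is already a neighbour of $v$) and the fact that prefixes of shortest paths are shortest paths, which is exactly what makes the cancellation step in the lower bound legitimate.
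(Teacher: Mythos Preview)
Your argument is correct: taking the penultimate vertex on a shortest $u$--$v$ path and using the triangle inequality twice gives exactly the required equality $d_G(u',w)=1+d_G(v,w)$ for every $w\in A$. The paper itself offers no proof of this observation, so there is nothing to compare against; your sliding-along-a-shortest-path argument is precisely the kind of routine verification the authors evidently had in mind.
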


\begin{observation}\label{outOnPath}
 Let $A \subseteq V$, $v, u_1, u_2 \in V$ and $v$ be an $A$-gate of $G$. If $u_1$ and $u_2$ are two out-vertices for $A$-gate $v$ with the same distance to $v$, i.e. $d_G(u_1,v) = d_G(u_2,v)$, then both vertices $u_1$ and $u_2$ have the same distance to all vertices of $A$. In this case $A$ is not a resolving set for $G$. Conversely, if $A$ is a resolving set for $G$ then all out-vertices have a different distance to $A$-gate $v$. A closer look shows that if $A$ is a resolving set for $G$ all out-vertices for $A$-gate $v$ are on a shortest path between $v$ and the out-vertex with longest distance to $v$, see also Figure \ref{outVertex}.  
\end{observation}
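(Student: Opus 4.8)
The statement packages four assertions, and the plan is to prove them in order, each resting on the previous one: (i) two out-vertices $u_1,u_2$ for an $A$-gate $v$ with $d_G(u_1,v)=d_G(u_2,v)$ have the same distance to every vertex of $A$; (ii) hence $A$ does not resolve the pair $\{u_1,u_2\}$, so it is not a resolving set; (iii) the contrapositive of (i)--(ii): if $A$ is a resolving set then the out-vertices for $v$ have pairwise distinct distances to $v$; and (iv) under the same hypothesis, every out-vertex for $v$ lies on a single shortest path from $v$ to the out-vertex farthest from $v$.

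For (i) I would just unfold the defining identity of an out-vertex. With $t := d_G(u_1,v) = d_G(u_2,v)$, the gate condition gives $d_G(u_i,w) = d_G(u_i,v) + d_G(v,w) = t + d_G(v,w)$ for $i=1,2$ and every $w\in A$, so $d_G(u_1,w)=d_G(u_2,w)$. Assertion (ii) is then immediate (if $u_1\ne u_2$, no $w\in A$ separates them), and (iii) is simply the contrapositive.

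The real content is (iv). Fix a resolving set $A$; by (iii) the out-vertices for $v$ have pairwise distinct distances to $v$, so there is a unique out-vertex $u^\ast$ of maximum distance $t^\ast := d_G(u^\ast,v)$ from $v$, and I would fix one shortest path $P=(u^\ast=x_0,x_1,\dots,x_{t^\ast}=v)$ from $u^\ast$ to $v$. The key lemma is that each $x_i$ with $0\le i<t^\ast$ is itself an out-vertex for $v$ at distance $t^\ast-i$ from $v$: the distance claim follows because every subpath of a shortest path is shortest (so $d_G(x_i,v)\le t^\ast-i$ and $d_G(u^\ast,x_i)=i$) combined with $d_G(x_i,v)\ge d_G(u^\ast,v)-d_G(u^\ast,x_i)$; and the out-vertex property of $x_i$ follows by squeezing $d_G(x_i,w)$ between the upper bound $d_G(x_i,v)+d_G(v,w)$ (triangle inequality) and the lower bound $d_G(u^\ast,w)-d_G(u^\ast,x_i)=(t^\ast+d_G(v,w))-i=d_G(x_i,v)+d_G(v,w)$, where the first equality uses that $u^\ast$ is an out-vertex. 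Granting the lemma, let $u$ be an arbitrary out-vertex and set $j:=d_G(u,v)$, so $1\le j\le t^\ast$; by the lemma $x_{t^\ast-j}$ is an out-vertex for $v$ at distance $j$ from $v$, and since $A$ is a resolving set (iii) forces $u=x_{t^\ast-j}$, i.e.\ $u$ lies on $P$. As $u$ was arbitrary, all out-vertices for $v$ lie on this single shortest $v$--$u^\ast$ path.

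The only delicate point I anticipate is arranging the two triangle-inequality bounds in the key lemma so that they pinch to equality --- this is precisely where the gate property of the farthest out-vertex $u^\ast$ gets propagated backward along the shortest path $P$. Everything else is bookkeeping about degenerate cases: an out-vertex is by definition distinct from $v$, which gives $j\ge 1$ above and $t^\ast\ge 1$, so when $t^\ast=1$ the path $P$ has no interior vertices and $u^\ast$ is the only out-vertex; the second half of the statement is vacuous when $A$ is not a resolving set; and nothing in the argument changes if $v\in A$ or $|A|=1$.
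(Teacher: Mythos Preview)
Your argument is correct. The paper does not actually supply a proof of this observation: it is stated as an ``Observation'' and the last claim is justified only by ``A closer look shows that \ldots'' together with a reference to Figure~\ref{outVertex} (an illustration on $C_6$). Your write-up fills in precisely the details the paper omits. Parts (i)--(iii) are the obvious unfolding of the out-vertex identity $d_G(u,w)=d_G(u,v)+d_G(v,w)$, and for part (iv) your key lemma --- that every vertex on a fixed shortest $u^\ast$--$v$ path is itself an out-vertex at the expected distance --- is exactly the right mechanism, with the squeeze between the two triangle-inequality bounds being the clean way to propagate the gate property from $u^\ast$ back along the path. Combined with the uniqueness of out-vertices at each distance from (iii), this pins every out-vertex to the chosen path, which is even slightly stronger than the phrasing in the paper (``on \emph{a} shortest path'').
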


\begin{figure}[]
\center
\includegraphics[width=0.8 \textwidth]{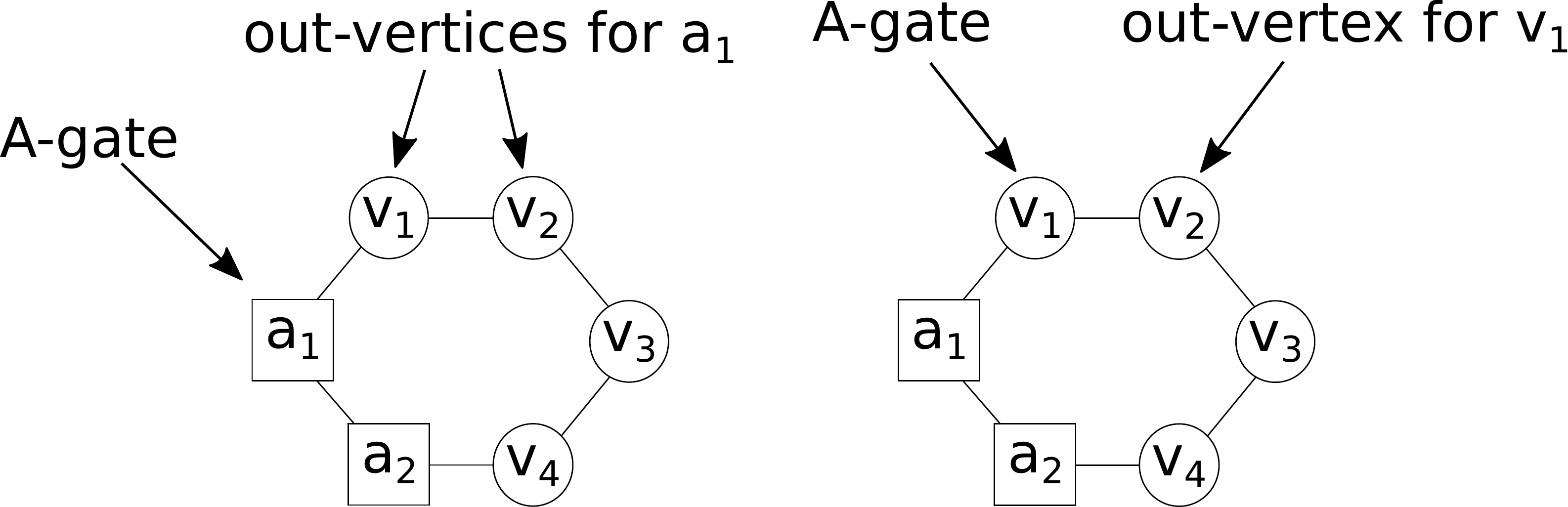}
\caption{\label{outVertex} The Figure shows graph $C_6=(V,E)$ drawn two times. The set $A=\{a_1,a_2\} \subseteq V$ is a resolving set for $C_6$. The vertex $a_1$ is an $A$-gate with out-vertices $v_1$ and $v_2$ as shown on the left side. The vertex $v_1$ is an $A$-gate with out-vertex $v_2$ as shown on the right side. Notice that there is no other $A$-gate in $C_6$.}
\end{figure}

\begin{definition}[$v$-resolving set, non-gate-$v$-resolving set]
Let $v \in V.$ 
\begin{enumerate}
 \item A {\em $v$-resolving set} for $G$ is a resolving set $R$ for $G$ with $v \in R$.
 \item A {\em minimum $v$-resolving set} for $G$ is a resolving set $R$ for $G$ with $v \in R$ such that there is no $v$-resolving set $R'$ for $G$ with $|R'| < |R|$.
 \item A {\em non-gate-$v$-resolving set} for $G$ is a $v$-resolving set $R$ for $G$ with $v \in R$ and $v$ is not an $R$-gate in $G$.
 \item A {\em minimum non-gate-$v$-resolving set} for $G$ is a $v$-resolving set $R$ for $G$ with $v \in R$ and $v$ is not an $R$-gate in $G$, such that there is no non-gate-$v$-resolving set $R'$ for $G$ with $|R'| < |R|$.
\end{enumerate}

\end{definition}

Note that a minimum $v$-resolving set is not necessarily a minimum resolving set. It is possible that no minimum resolving set contains $v$. Also a minimum non-gate-$v$-resolving set is not necessarily a minimum $v$-resolving set. It is possible that in every minimum $v$-resolving set $R$ vertex $v$ is an $R$-gate.

\begin{lemma} \label{eq}
Let $v \in V$. Let $R_1 \subseteq V$ a minimum resolving set for $G$, $R_2 \subseteq V$ a minimum $v$-resolving set for $G$, and $R_3 \subseteq V$ a minimum non-gate-$v$-resolving set for $G$, then  $|R_2| \leq |R_1|+1$ and $|R_3| \leq |R_2|+1$.
\end{lemma}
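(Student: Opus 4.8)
The plan is to establish the two inequalities independently, in each case by exhibiting a set of the required kind that exceeds the reference set by at most one vertex. For $|R_2|\le|R_1|+1$ nothing is needed beyond the definitions: if $R_1$ is a minimum resolving set, then $R_1\cup\{v\}$ is still a resolving set and contains $v$, hence it is a $v$-resolving set, so minimality of $R_2$ gives $|R_2|\le|R_1\cup\{v\}|\le|R_1|+1$. The work is all in the second inequality. Starting from a minimum $v$-resolving set $R_2$, if $v$ is not an $R_2$-gate then $R_2$ is already a non-gate-$v$-resolving set and $|R_3|\le|R_2|$; so I may assume that $v$ is an $R_2$-gate.

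Since $R_2$ is a resolving set, Observation~\ref{outOnPath} tells us that the out-vertices for the $R_2$-gate $v$ have pairwise distinct distances to $v$ and all lie on one shortest path from $v$ to the out-vertex $u^{\ast}$ of maximum distance $D:=d_G(v,u^{\ast})$ to $v$; in particular $u^{\ast}$ is well defined, and $u^{\ast}\notin R_2$, because the gate equation applied to $w=u^{\ast}$ would otherwise give $0=d_G(u^{\ast},u^{\ast})=2D$, forcing $u^{\ast}=v$. I then take $R_2':=R_2\cup\{u^{\ast}\}$, which is a resolving set containing $v$ with $|R_2'|=|R_2|+1$. It remains to prove that $v$ is not an $R_2'$-gate; this yields $|R_3|\le|R_2'|=|R_2|+1$.

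Suppose, for contradiction, that some $u'\in V\setminus\{v\}$ is an out-vertex for the $R_2'$-gate $v$. Since $R_2\subseteq R_2'$, the same $u'$ is an out-vertex for the $R_2$-gate $v$, and $u'\ne u^{\ast}$ (again $w=u^{\ast}$ would force $D=0$). By Observation~\ref{outOnPath} the vertex $u'$ lies on a shortest path from $v$ to $u^{\ast}$, so $d_G(u',u^{\ast})=D-d_G(v,u')$. But the gate equation for $R_2'$ at $w=u^{\ast}$ gives $d_G(u',u^{\ast})=d_G(u',v)+d_G(v,u^{\ast})=d_G(v,u')+D$; comparing the two expressions forces $d_G(v,u')=0$, i.e. $u'=v$, a contradiction. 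Hence $R_2'$ is a non-gate-$v$-resolving set. The only delicate point in the whole argument is this last step: adding an arbitrary out-vertex to $R_2$ (even one adjacent to $v$, as furnished by Observation~\ref{adjOut}) need not remove the gate property, and it is precisely the maximality of $D$ together with the ``all out-vertices on one shortest path'' part of Observation~\ref{outOnPath} that makes the distance cancellation go through.
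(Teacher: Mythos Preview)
Your proof is correct and follows the same line as the paper's: add an out-vertex to $R_2$ and use Observation~\ref{outOnPath} to rule out any remaining out-vertex for $v$. One correction to your closing remark, though: the paper in fact asserts that adding \emph{any} out-vertex $u_i$ (not only the farthest one $u^{\ast}$) already yields a non-gate-$v$-resolving set, and your own distance argument extends to show this. Indeed, if $u'$ were an out-vertex for the $(R_2\cup\{u_i\})$-gate $v$, then $u'$ is also an out-vertex for the $R_2$-gate $v$, so by Observation~\ref{outOnPath} both $u'$ and $u_i$ lie on one shortest path from $v$ to $u^{\ast}$, giving $d_G(u',u_i)=|d_G(v,u')-d_G(v,u_i)|$; equating this with the gate identity $d_G(u',u_i)=d_G(v,u')+d_G(v,u_i)$ forces one of $u',u_i$ to coincide with $v$, a contradiction. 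So the maximality of $D$ is convenient but not essential.
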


\begin{proof}
 $|R_2| \leq |R_1|+1$ : If $R_1$ contains vertex $v$, $R_1$ is already a minimum $v$-resolving set. If $v \not \in R_1$ the set $R_2 := R_1 \cup \{v\}$ is a $v$-resolving set.\\
 $|R_3| \leq |R_2|+1$ : If $v$ is not an $R_2$-gate, $R_2$ is already a minimum non-gate-$v$-resolving set. If $v$ is an $R_2$-gate there are out-vertices $u_1, \ldots ,u_k$ for $v$. From Observation \ref{outOnPath} we know that these out-vertices are on a shortest path between $v$ and the out-vertex with longest distance to $v$, so for any $i \in \{1, \ldots , k\}$, $k\geq 1$, the set $R_2 \cup \{u_i\}$ is a non-gate-$v$-resolving set. 
\end{proof}

Figure \ref{sharpBounds} shows that these bounds are tight.

\begin{figure}[]
\center 
\includegraphics[width=0.5\textwidth]{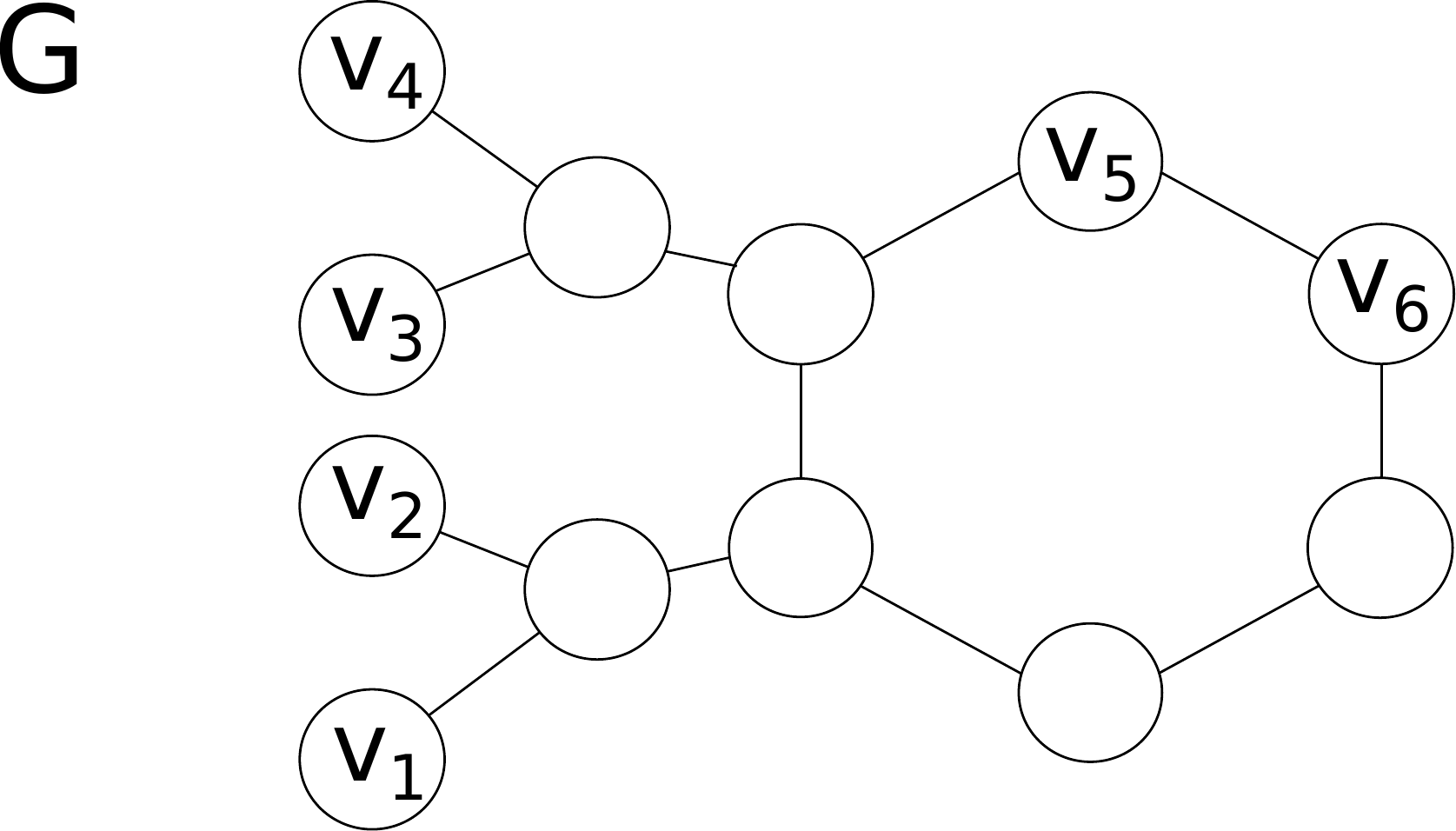}
\caption{
\label{sharpBounds} There are four minimum resolving sets for $G$, namely $\{v_1, v_3\}$, $\{v_1, v_4\}$,$\{v_2, v_3\}$, and $\{v_2, v_4\}$. None of them contains $v_5$. Therefore every minimum $v_5$-resolving set contains three vertices, i.e. $\{v_1, v_3, v_5\}$, $\{v_1, v_4, v_5\}$, $\{v_2, v_3, v_5\}$, and $\{v_2, v_4, v_5\}$. In every minimum $v_5$-resolving set $R$ the vertex $v_5$ is an $R$-gate. The corresponding out-vertex is $v_6$. Therefore every minimum non-gate-$v_5$-resolving set contains four vertices, i.e. $\{v_1,v_3,v_5,v_6\}$, $\{v_1,v_4,v_5,v_6\}$,$\{v_2,v_3,v_5,v_6\}$, and $\{v_2,v_4,v_5,v_6\}$.
}
\end{figure}

\begin{lemma}\label{ancorfreeSet}
 Let $s \in V$ be a separation vertex and $V_1, \ldots , V_k$, $k>2$, be the vertex sets of the connected components of $G_{|V \setminus \{s\}}$. Let $R$ be a resolving set for $G$. Then there is at most one $i \in \{1, \ldots ,k\}$ such that $V_i \cap R = \emptyset$.
\end{lemma}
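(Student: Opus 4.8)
The plan is to argue by contradiction: suppose $R$ is a resolving set for $G$ and there are two distinct indices $i \neq j$ with $V_i \cap R = \emptyset$ and $V_j \cap R = \emptyset$. I want to produce two distinct vertices of $G$ that are not resolved by any vertex of $R$, contradicting that $R$ is resolving. The key observation is that every vertex of $R$ lies in some component $V_\ell$ with $\ell \neq i$ and $\ell \neq j$ (or $R$ could even lie partly in other components, but crucially $R$ avoids $V_i$ and $V_j$ entirely), and every shortest path from a vertex of $V_i$ (respectively $V_j$) to any vertex outside $V_i$ (respectively $V_j$) must pass through the separation vertex $s$. Hence for any $w \in R$ and any $u \in V_i$ we have $d_G(u,w) = d_G(u,s) + d_G(s,w)$, and similarly for $V_j$.

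First I would pick a vertex $u_i \in V_i$ and a vertex $u_j \in V_j$ that are ``equidistant'' from $s$, i.e. $d_G(u_i, s) = d_G(u_j, s)$. To see such a pair exists, note that along a shortest path from $s$ into $V_i$, the distances to $s$ take all integer values $0, 1, 2, \ldots$ up to $\mathrm{ecc}$, and likewise for $V_j$; in particular both $V_i$ and $V_j$ contain a neighbour of $s$, giving vertices at distance exactly $1$ from $s$ in each component. So take $u_i \in V_i$ and $u_j \in V_j$ with $d_G(u_i,s) = d_G(u_j,s) = 1$ (or more generally any common value). These two vertices are distinct since they lie in different connected components of $G_{|V \setminus \{s\}}$.

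Now for any $w \in R$: since $w \notin V_i$, every path from $u_i$ to $w$ goes through $s$, so $d_G(u_i, w) = d_G(u_i, s) + d_G(s, w)$; similarly $d_G(u_j, w) = d_G(u_j, s) + d_G(s, w)$. Since $d_G(u_i,s) = d_G(u_j,s)$, we get $d_G(u_i, w) = d_G(u_j, w)$ for every $w \in R$. Thus no vertex of $R$ resolves the pair $u_i, u_j$, contradicting that $R$ is a resolving set. Hence at most one $V_i$ can be disjoint from $R$.

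The main obstacle — though it is minor — is making precise the claim that every path from a vertex of $V_i$ to a vertex outside $V_i$ passes through $s$; this is exactly the defining property of a separation vertex: removing $s$ disconnects $V_i$ from the rest, so any $u$-to-$w$ path with $u \in V_i$, $w \notin V_i \cup \{s\}$ must contain $s$, and the distance identity $d_G(u,w) = d_G(u,s) + d_G(s,w)$ follows by concatenating shortest $u$-$s$ and $s$-$w$ paths and noting no shorter path exists. The hypothesis $k > 2$ is not actually needed for this argument — it works already for $k \geq 2$ — but it is the relevant regime for the application.
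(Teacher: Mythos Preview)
Your proof is correct and follows essentially the same approach as the paper: assume two components $V_i$, $V_j$ are disjoint from $R$, pick neighbours $u_i\in V_i$, $u_j\in V_j$ of $s$, and use the separation property to show $d_G(u_i,w)=d_G(u_j,w)$ for all $w\in R$, a contradiction. Your additional remark that the argument already works for $k\geq 2$ is correct and worth noting.
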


\begin{proof}
 Assume there are two sets $V_1, V_2$ such that $V_1 \cap R = \emptyset$ and $V_2 \cap R = \emptyset$. Consider a vertex $v_1 \in V_1$ adjacent to $s$ and a vertex $v_2 \in V_2$ adjacent to $s$. For any $r \in R$ we have $d(v_1,r) = d_G(v_1,s) + d_G(s,r) = d_G(v_2,s) + d_G(s,r) = d_G(v_2,a)$. This contradicts the assumption that $R$ is a resolving set.
\end{proof}

\begin{lemma}\label{withoutSep}
 Let $s \in V$ be a separation vertex and $V_1, \ldots , V_k$, $k>1$, be the vertex sets of the connected components of $G_{|V \setminus \{s\}}$ such that if $k=2$ in every resolving set $R$ for $G$ there is a vertex $r \in V_1 \cap R$ and a vertex $r \in V_2 \cap R$. Let $A \subseteq V$ with $s \in A$. If $A$ is a resolving set for $G$ then $A' = A \setminus \{s\}$ is resolving set for $G$.
\end{lemma}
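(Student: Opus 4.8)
The idea is to show that every pair of distinct vertices that was resolved in $G$ by $s$ is in fact still resolved by some vertex of $A' = A \setminus \{s\}$, so that dropping $s$ from $A$ costs nothing. The first step I would carry out is a structural observation about $A'$: it still meets at least two of the components $V_1, \dots, V_k$. If $k = 2$, this is immediate from the stated hypothesis applied to the resolving set $A$, since $s$ lies in none of the $V_i$ and hence $A' \cap V_i = A \cap V_i \neq \emptyset$ for $i = 1, 2$. If $k > 2$, I would invoke Lemma \ref{ancorfreeSet}: it tells us that at most one of the $V_i$ is disjoint from the resolving set $A$, so at least $k - 1 \geq 2$ of them contain a vertex of $A$, and such a vertex automatically lies in $A'$ because $s \notin V_i$.

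Next I would fix an arbitrary pair $u, v \in V$ with $u \neq v$ and split into two cases according to whether $s$ resolves them. If $d_G(u, s) = d_G(v, s)$, then $s$ does not resolve $u$ and $v$, so because $A$ is resolving there already is a vertex $w \in A \setminus \{s\} = A'$ with $d_G(u, w) \neq d_G(v, w)$. Otherwise $d_G(u, s) \neq d_G(v, s)$, and I may assume $d_G(u, s) < d_G(v, s)$; in particular $v \neq s$, so $v$ belongs to some component $V_j$, and by the first step $A'$ contains a vertex $w$ lying in a component $V_l$ with $l \neq j$.

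The final step is a one-line distance estimate: since $w$ and $v$ are in different components of $G_{|V \setminus \{s\}}$, every $w$--$v$ path runs through $s$, giving $d_G(w, v) = d_G(w, s) + d_G(s, v)$, whereas concatenating shortest $w$--$s$ and $s$--$u$ paths yields $d_G(w, u) \leq d_G(w, s) + d_G(s, u)$; together with $d_G(s, u) < d_G(s, v)$ this forces $d_G(w, u) < d_G(w, v)$, so $w$ resolves $u$ and $v$. As $u, v$ were arbitrary, $A'$ is a resolving set. The only part that needs genuine care is the first step, where the regimes $k = 2$ and $k > 2$ must be handled by different tools (the hypothesis versus Lemma \ref{ancorfreeSet}); the hypothesis for $k = 2$ is exactly what excludes the degenerate situation in which $A'$ collapses into a single component and could fail to resolve a cross-component pair.
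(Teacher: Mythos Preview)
Your proof is correct and follows essentially the same approach as the paper: both use the separation structure at $s$ together with Lemma~\ref{ancorfreeSet} (respectively the $k=2$ hypothesis) to locate a vertex of $A'$ in a component different from one of $u,v$, and then compare distances routed through $s$. Your organization is in fact slightly cleaner than the paper's, since by anchoring the choice of $w$ to the component of the \emph{farther} vertex $v$ and using only the triangle inequality for $d_G(w,u)$, you avoid the paper's subcase split on whether $u,v$ lie in the same component and on whether $V_i\cap A$ is empty.
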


\begin{proof} 
 Consider two vertices $u,v \in V$ that are separated by $s$. We will show, that there is a vertex $a \in A$, $a \neq s$, that separates $u$ and $v$.\\
 Assume $u$ and $v$ are both in the same set $V_i$, $i \in \{1, \ldots , k\}$. Then, by assumption and Lemma \ref{ancorfreeSet}, there is a vertex $a \in A \cap V_j$, $j \neq i$. Since every path from $u$ to $a$ or $v$ to $a$ contains vertex $s$ and $s$ separates $u$ and $v$, vertex $a$ does the same.\\
 Assume $u$ and $v$ are in different sets. Without loss of generality let $v_i \in V_i$, $v_j \in V_j$, $i \neq j$, and $d_G(u,s) < d_G(v,s)$.\\ 
 {\bf Case 1:} $V_i \cap A = \emptyset$\\
 If $V_i \cap A = \emptyset$ then by assumption $G_{|V\setminus \{s\}}$ has at least three components. By Lemma \ref{ancorfreeSet} there is a component $V_l$, $l \neq i$, $l \neq j$, such that $V_l \cap A \neq \emptyset$. Let $a \in V_l \cap A$, then we have
 $$d_G(u,a) = d_G(u,s) + d_G(s,a) < d_G(v,s) + d_G(s,a) = d_G(v,a).$$
 {\bf Case 2:} $V_i \cap R \neq \emptyset$\\
 Then we have for any $a \in V_i$
 $$d_G(u,a) \leq d_G(u,s) + d_G(s,a) < d_G(v,s) + d_G(s,a) = d_G(v,a).$$
\end{proof}

\begin{lemma}\label{maxOneGate}
Let $s\in V$ be a separation vertex and $V_1, \ldots ,V_k$, $k>1$, be the vertex sets of the connected components of $G|_{V\setminus\{s\}}$ such that if $k=2$ in every resolving set $R$ for $G$ there is a vertex $r \in V_1 \cap R$ and a vertex $r \in V_2 \cap R$. Let $A \subseteq V$, $G_i := G|_{V_i\cup\{s\}}$, and $A_i := (A \cap V_i) \cup \{s\}$. $A$ is a minimum resolving set for $G$ if and only if for all $i \in \{1, \ldots , k\}$ $A_i$ is a minimum $s$-resolving set for $G_i$ and there is at most one $i \in \{1, \ldots ,k\}$ such that $s$ is an $A_i$-gate in $G_i$.
\end{lemma}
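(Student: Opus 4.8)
The plan is to prove the two implications by relaying resolving information across the cut vertex $s$ between $G$ and the pieces $G_i$. The two distance facts that drive everything are: for $u\in V_i$ and $v\in V_j$ with $i\neq j$ every shortest $u$--$v$ path uses $s$, so $d_G(u,v)=d_G(u,s)+d_G(s,v)$; and $d_{G_i}(x,y)=d_G(x,y)$ for all $x,y\in V_i\cup\{s\}$. Both are immediate because $s$ separates the $V_i$'s. On top of these I would first establish two ``gluing'' statements. (A) If $A$ is a resolving set for $G$ then every $A_i$ is an $s$-resolving set for $G_i$: take distinct $u,v\in V_i\cup\{s\}$ and a $w\in A$ resolving them in $G$; if $w\in V_i\cup\{s\}$ then $w\in A_i$ and distances agree in $G_i$, and if $w$ lies in another component then both $d_G(u,w)$ and $d_G(v,w)$ route through $s$, forcing $d_G(u,s)\neq d_G(v,s)$, so the vertex $s\in A_i$ resolves $u,v$ inside $G_i$. (B) If every $A_i$ is an $s$-resolving set for $G_i$, there is at most one $i$ with $s$ an $A_i$-gate in $G_i$, and (when $k=2$) the extra hypothesis on $V_1,V_2$ holds, then $A$ resolves $G$: a pair inside a single $V_i\cup\{s\}$ is handled by $A_i$; for a cross pair $u\in V_i$, $v\in V_j$ choose among $i,j$ an index $\ell$ that is not the (at most one) gate index, so $s$ is not an $A_\ell$-gate, which forces $A\cap V_\ell\neq\emptyset$ and, for the endpoint of the pair lying in $V_\ell$, produces a $w\in A\cap V_\ell$ admitting a shortest path to it avoiding $s$; this $w$ resolves $u,v$ in $G$ unless the endpoint nearer to $s$ is the one outside $V_\ell$, a residual sub-case pushed into a third component when $k>2$ and impossible when $k=2$ since it would make $\{s\}$ a resolving set of some $G_i$, against the $k=2$ hypothesis.

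Next I would prove the ``at most one gate'' conjunct of the forward direction as a standalone fact about any resolving set $A$ of $G$: if $s$ were an $A_{i_1}$-gate and an $A_{i_2}$-gate with $i_1\neq i_2$, pick out-vertices $u_1\in V_{i_1}$, $u_2\in V_{i_2}$ adjacent to $s$ (Observation~\ref{adjOut}); then for every $w\in A$ one computes $d_G(u_1,w)=d_G(u_2,w)=1+d_G(s,w)$ by a case split on which component contains $w$ (in the spirit of Observation~\ref{outOnPath}), so $A$ fails to resolve $u_1\neq u_2$. Combined with Lemma~\ref{withoutSep} --- which gives $s\notin A$ for every minimum resolving set --- this yields one conjunct of the forward direction and, together with (A) and (B), the whole backward direction: if the right-hand side holds, (B) makes $A$ a resolving set, and for minimality note that, with $s\notin A$, $|A|=\sum_i(|A_i|-1)$ while any resolving set $R$ of $G$ satisfies $|R|\geq\sum_i|R\cap V_i|=\sum_i(|R_i|-1)\geq\sum_i(|A_i|-1)$, because each $R_i$ is $s$-resolving by (A) and each $A_i$ is a minimum $s$-resolving set; the case $s\in A$ is discarded via Lemma~\ref{withoutSep}.

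The genuinely delicate step is the remaining conjunct of the forward direction: that each $A_i$ is a \emph{minimum} $s$-resolving set for $G_i$. Starting from a minimum resolving set $A$ of $G$, (A) already gives that each $A_i$ is $s$-resolving, and if some $A_{i_0}$ were not minimum one would like to replace $A\cap V_{i_0}$ by the non-$s$ part of a smaller $s$-resolving set $B$ of $G_{i_0}$ and use (B) to re-certify resolving, contradicting minimality of $A$. The obstacle is that (B) needs ``at most one gate'', and the swapped-in piece $B$ may itself have $s$ as a gate while another piece already does; repairing this forces one to pass, on all but one offending index, to a minimum \emph{non-gate}-$s$-resolving set, which by Lemma~\ref{eq} is larger by at most one vertex. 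The heart of the argument is then a careful count: letting $m_i$ and $n_i\in\{m_i,m_i+1\}$ denote the sizes of a minimum $s$-resolving set and a minimum non-gate-$s$-resolving set of $G_i$, one must choose which single index is allowed to keep its gate (best taken where $n_i>m_i$) and verify that the saving $|A_{i_0}|-|B|\geq 1$ still dominates the accumulated $+1$'s --- this balancing, together with matching upper and lower bounds on the metric dimension of $G$ expressed through the $m_i$ and $n_i$, is where the real work lies and is the step I expect to be the main obstacle.
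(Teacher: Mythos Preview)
Your outline for the non-minimum equivalence --- that $A$ resolves $G$ if and only if every $A_i$ is $s$-resolving in $G_i$ and at most one index has $s$ as an $A_i$-gate --- is essentially the paper's argument and is correct. The paper proves exactly your (A), your gate-collision contradiction, and your (B) (invoking Lemma~\ref{withoutSep} at the end to drop $s$ rather than arguing directly).

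Your instinct that the minimality upgrade is the delicate step is right, and in fact the obstacle you isolate is fatal: the lemma as stated is false. Take $G$ to be the spider with centre $s$ and three legs of length two, i.e.\ paths $s\,x_i\,y_i$ for $i=1,2,3$. Here $k=3$, so the extra $k=2$ hypothesis is not in force. The set $A=\{y_1,y_2\}$ is a minimum resolving set for $G$, but $A_1=\{s,y_1\}$ has size $2$ while the minimum $s$-resolving set for the path $G_1$ is $\{s\}$ of size $1$; thus $A_1$ is not minimum and the forward implication fails. (The backward implication is vacuous here: every minimum $s$-resolving set of each $G_i$ has $s$ as a gate, so one cannot make all $A_i$ minimum while keeping at most one gate.) The paper's own proof skips over precisely the point you flag: after replacing $A_i$ by a smaller $s$-resolving set it asserts ``as shown above $A'$ is a resolving set'' without re-verifying the gate condition, which can indeed break after the swap.

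Your $m_i,n_i$ bookkeeping is the right repair, but it establishes a different (correct) statement: the metric dimension of $G$ equals $\sum_i n_i - k - \max_i(n_i-m_i)$, and in a minimum resolving set each $A_i$ is a minimum non-gate-$s$-resolving set except possibly for one index where it may be a minimum $s$-resolving set. That is exactly what Theorem~\ref{theorem1} actually needs and what its formula encodes; the lemma in its literal form is neither true nor required downstream.
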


\begin{proof}
$\Rightarrow$: Let $A$ be a resolving set for $G$. We will show that (1) for every $i \in \{1, \ldots , k\}$ the set $A_i$ is an $s$-resolving set for $G_i$ and (2) there is at most one index $i \in \{1, \ldots ,k\}$ such that $s$ is an $A_i$-gate in $G_i$.\\
(1): Consider two vertices $u, v \in V_i$. Obviously $A_i$ is a resolving set for those pairs of vertices that are separated by a vertex $a \in A\cap V_i$.
Assume that for a pair of vertices $u,v$ there is no vertex in $A \cap V_i$ that separates them, i.e. $u,v$ are separated by a vertex $a \in A \cap V_l$, $l \neq i$. Since all paths from $u$ to $a$ and $v$ to $a$ contain vertex $s$, we have $$d_G(u,a) = d_G(u,s) + d_G(s,a) \text{ and } d_G(v,a) = d_G(v,s) + d_G(s,a).$$ Therefore $d_G(u,a) \neq d_G(v,a)$ implies $d_G(u,s) \neq d_G(v,s)$, i.e. $s \in A_i$ separates $u$ and $v$.\\
(2): Assume there are two indices $i,j \in\{1, \ldots, k\}$, $i\neq j$, for which $s$ is an $A_i$-gate in $G_i$ and an $A_j$-gate in $G_j$. Then there are out-vertices $v_{i_1}, \ldots , v_{i_l}\in V_i$ for $A_i$-gate $s$ in $G_i$ and out-vertices $v_{j_1}, \ldots , v_{j_m} \in V_j$ for $A_j$-gate $s$ in $G_j$, $l,m>0$. From Observation \ref{outOnPath} we know that the out-vertices $v_{i_1}, \ldots , v_{i_l}\in V_i$ are on a shortest path between $s$ and the out-vertex of $v_{i_1}, \ldots , v_{i_l} \in V_i$ with longest distance to $s$. The same holds for the out-vertices $v_{j_1}, \ldots , v_{j_m} \in V_j$. Without loss of generality let $v_{i_1} \in V_i$ and $v_{j_1} \in V_j$ be the out-vertices adjacent to $s$, i.e. $d_G(v_{i_1},s) = d_G(v_{j_1},s) = 1$. Due to the decomposition of $G$ with separation vertex $s$ there is for any $a \in A$ a shortest path between $v_{i_1}$ and $a$ and between $v_{j_1}$ and $a$ via vertex $s$. This holds for all vertices $a \in A$ and not only for the vertices $a \in V_i$ or $a \in V_j$, respectively. It follows that

\[ \forall a \in A: \quad d_G(v_{i_1},w)= d_G(v_{i_1},s) + d_G(s,w) = d_G(v_{j_1},s) + d_G(s,w) = d_G(v_{j_1},w)  \]

This contradicts the assumption that $A$ is a resolving set for $G$, see also Figure \ref{separationGate}.\\ \null

$\Leftarrow$: Let $A_i$ be an $s$-resolving set for $G_i$ and let there be at most one $i \in \{1, \ldots ,k\}$ such that $s$ is an $A_i$-gate in $G_i$. We will proof that $A$ is a resolving set for $G$.\\
Let $A' = \bigcup_{i=1}^k A_i$. Obviously $A'$ separates all pairs of vertices that are in the same set $V_i$. Consider two vertices $u,v$ that are in different set. Without loss of generality let $v_i \in V_i$ and $v_j \in V_j$, $i,j \in \{1, \ldots ,k\}$, $i \neq j$ and $G_i$ the component in which $s$ is not an $A_i$-gate for $G_i$. This implies that there is a vertex $a \in A_i$, $a \neq s$ such that there is no shortest path from $v_i$ to $a$ via $s$. Since every path from $v_j$ to $a$ contains $s$, vertex $a$ separates $u$ and $v$.
This implies that $A'$ is a resolving set for $G$ and with Lemma \ref{withoutSep} we get that that $A = A' \setminus \{s\}$ is a resolving set for $G$.\\ \null

Now we will show the minimality. Assume that $A$ is a minimal resolving set for $G$ and $A_i$ is not a minimal $s$-resolving set, i.e. there is set $A_i'$ with $|A_i'| < |A_i|$ that contains $s$ and resolves all vertices from $G_i$. Consider the set $A' := \bigcup_{j=1}^k A_j \setminus \{s\}$. As shown above $A'$ is a resolving set for $G$ and we have with $|A'| < |A|$. This contradicts the assumption that $A$ is minimal.\\
Assume that $A_i$ is a minimal $s$-resolving set and there is at most one $j \in \{1, \ldots ,k\}$ such that $s$ is an $A_j$-gate in $G_j$, but $A$ is not minimal, i.e. there is a resolving set $A'$ for $G$ with $|A'| < |A|$. This implies that there is at least one $i \in \{1, \ldots, k\}$ such that $A_i':=(A'\cap V_i) \cup \{s\}$ is a resolving set for $G_i$ and $|A_i'| < |A_i|$. This contradicts the assumtion that $A_i$ is minimal.
\end{proof}

\begin{figure}[] 
\center
\includegraphics[width=\textwidth]{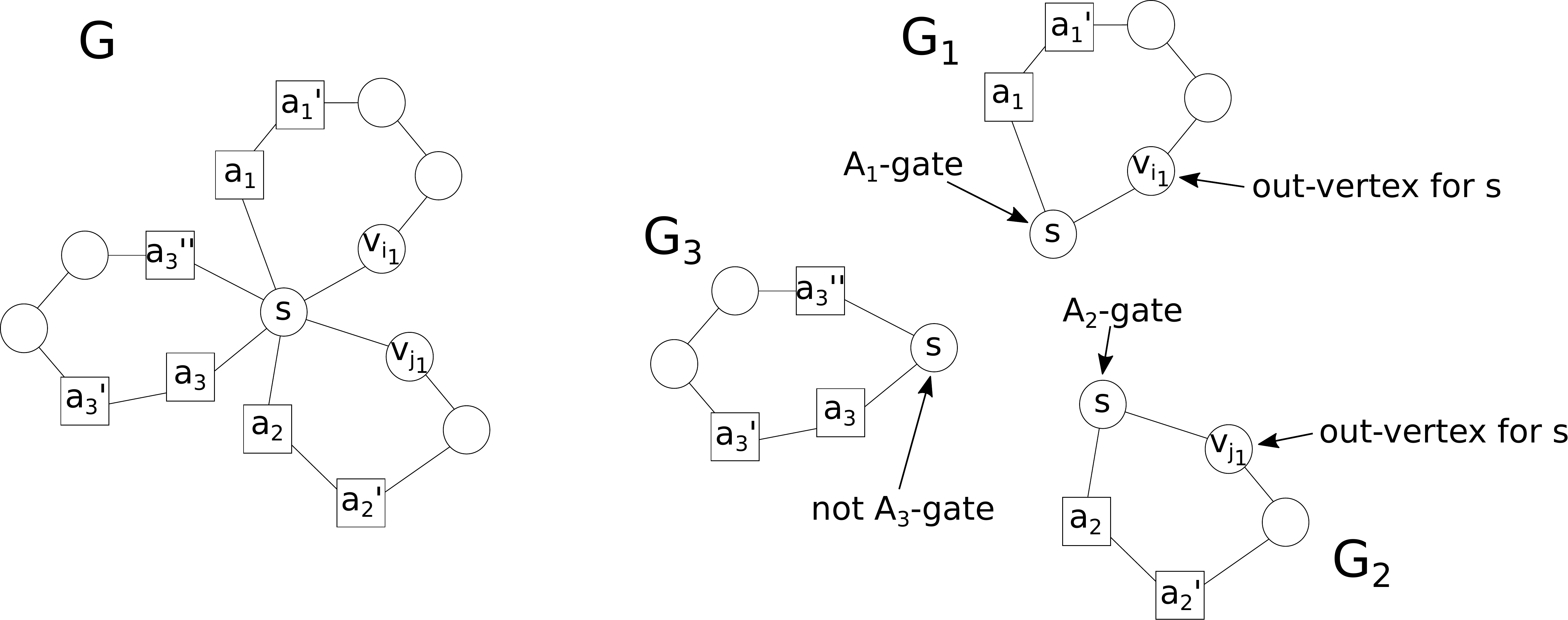}
\caption{\label{separationGate} Graph $G=(V,E)$ with three connected components and separation vertex $s \in V$ on the left side and graphs $G_1$, $G_2$ and $G_3$ on the right side. The vertices of the resolving sets $A_1=\{a_1, a_1'\}$ for $G_1$, $A_2=\{a_2, a_2'\}$ for $G_2$ and $A_3=\{a_3, a_3', a_3''\}$ for $G_3$ are drawn as squares. $s$ is an $A_1$-gate in $G_1$ and an $A_2$-gate in $G_2$, therefore there are two vertices $v_{i_1}$ and $v_{j_1}$ that are not separated by $A:= \bigcup_{i=1}^3 A_i$ and $A$ is not a resolving set for $G$.}
\end{figure}

\begin{lemma}\label{minSepSet}
 Let $s\in V$ be a separation vertex and $V_1, \ldots ,V_k$, $k>1$, the vertex sets of the connected components of $G|_{V\setminus\{s\}}$ such that if $k=2$ in every resolving set $R$ for $G$ there is a vertex $r \in V_1 \cap R$ and a vertex $r \in V_2 \cap R$. Let $A \subseteq V$. If $A$ is a minimum resolving set for $G$, then $A' := A \cup \{s\}$ is a minimum $s$-resolving set for $G$.
\end{lemma}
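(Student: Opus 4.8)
The plan is to deduce everything from Lemma~\ref{withoutSep}, whose hypotheses on the separation vertex $s$ are exactly those of the present statement. The argument splits into three short steps: (i) observe that $A'$ is an $s$-resolving set; (ii) show that $s$ lies in no minimum resolving set, so in particular $s\notin A$ and hence $|A'|=|A|+1$; (iii) show that no $s$-resolving set can have size smaller than $|A|+1$.

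For step~(i) I would simply note that $A\subseteq A'$ and $A$ resolves $G$, so $A'$ resolves $G$, while $s\in A'$ by construction; thus $A'$ is an $s$-resolving set. For step~(ii) I would argue by contradiction: if some minimum resolving set $R$ contained $s$, then by Lemma~\ref{withoutSep} the set $R\setminus\{s\}$ would still resolve $G$ and be strictly smaller, contradicting minimality of $R$. Hence no minimum resolving set contains $s$; since $A$ is one, $s\notin A$ and therefore $|A'|=|A|+1$. For step~(iii) I would again argue by contradiction: suppose there is an $s$-resolving set $B$ with $|B|\le|A|$. Then $B$ is a resolving set for $G$ containing $s$, so Lemma~\ref{withoutSep} gives that $B\setminus\{s\}$ resolves $G$ and has size $|B|-1\le|A|-1<|A|$, contradicting the minimality of $A$. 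Consequently every $s$-resolving set for $G$ has size at least $|A|+1=|A'|$, and since $A'$ is itself an $s$-resolving set of exactly that size, it is a minimum $s$-resolving set.

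I expect the only non-routine point to be step~(ii): recognising that the technical side condition on $s$ — precisely the condition that makes Lemma~\ref{withoutSep} applicable — is what forces $s$ out of every minimum resolving set and thereby makes the extra vertex genuinely necessary; once that is established, steps~(i) and~(iii) are essentially two-line counting arguments. This is also consistent with Lemma~\ref{eq}: the bound $|R_2|\le|R_1|+1$ there is attained with equality in exactly this situation.
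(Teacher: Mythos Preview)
Your proof is correct and follows essentially the same route as the paper: both arguments use Lemma~\ref{withoutSep} first to conclude that $s\notin A$ (so $|A'|=|A|+1$), and then to derive a contradiction from a hypothetical smaller $s$-resolving set by stripping out $s$. Your step~(ii) is phrased slightly more generally (no minimum resolving set contains $s$), but the structure and the key tool are identical.
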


\begin{proof}
Let $A$ be a minimum resolving set and $A' = A \cup \{s\}$ an $s$-resolving set. From Lemma \ref{withoutSep} we know that $s \notin A$, therefore $|A|<|A'|$. Assume there is a miniumum $s$-resolving set $A''$ with $|A''| < |A'|$. Then, by Lemma \ref{withoutSep}, $A'' \setminus \{s\}$ is a resolving set for $G$ and we have $|A'' \setminus \{s\}| < |A|$. This contradicts the assumption that $A$ is a minimum resolving set for $G$.
\end{proof}

Now we define $h(v)$, $v \in V_T$, as introduced at the beginning of the section.

\begin{definition}~ \label{h}
\begin{enumerate}
 \item Let $a \in V_a$ be an a-node. We define $h(a) := (\alpha, \beta)$, where $\alpha$ is the size of a minimum non-gate-\textnu $(a)$-resolving set for $G[a]$ and $\beta$ is the size of a minimum \textnu $(a)$-resolving set for $G[a]$.
 \item Let $c \in V_c$ be a c-node with father $a \in V_a$ in $\overrightarrow{T}$. We define $h(c) := (\alpha, \beta)$, where $\alpha$ is the size of a minimum non-gate-\textnu $(a)$-resolving set for $G[c]$ and $\beta$ is the size of a minimum \textnu $(a)$-resolving set for $G[c]$.
\end{enumerate}
\end{definition}

To get familiar with this definition we will investigate the smallest possible values for $\alpha$ and $\beta$. For an arbitrary node $v \in \overrightarrow{T}$ with father $w \in \overrightarrow{T}$ we have $h(v) \leq h(w)$, i.e. the $i$-th component of $h(v)$ is less or equal than the $i$-th component of $h(w)$, since $G[v]$ is a subgraph of $G[w]$. Therefore we will first have a look at the leafs of $\overrightarrow{T}$, which are by definition c-nodes and afterwards at the fathers of the leafs, which are by definition a-nodes.
For a leaf $c$ with father $a$ the graph $G[c]$ is either an EBC, or an ordinary leg. Let $G[c]$ be an ordinary leg. Then vertex \textnu $(a) \in G[c]$ resolves all vertices in $G[c]$, so $\beta =1$. Since \textnu $(a)$ is a $\{\text{\textnu }(a)\}$-gate in $G[c]$ every minimum non-gate-\textnu $(a)$-resolving set contains another arbitrary vertex. Therefore $\alpha = 2$. Let $G[c]$ be an EBC, then every resolving set for $G[c]$ contains at least two vertices. Therefore $h(c) \geq (2,2)$.\\ \null  

For an a-node $a$ that has only leafs as children the graph $G[a]$ consists of EBCs and paths, that are connected by the separation vertex \textnu $(a)$. Note that if $a$ has exactly one child $c$ the graph $G[c]$ is not an ordinary leg, since this contradicts the decomposition of $G$ into EBCs, ordinary legs, and bridges. Thus every minimum \textnu $(a)$-resolving set for $G[a]$ contains at least two vertices and the smallest values $\alpha$ and $\beta$ for an a-node $a$ are $h(a)=(2,2)$, what leads to the following observation:

\begin{observation}
Let $S$ be a minimum resolving set for $G$. For any a-node $a \in V(\overrightarrow{T})$ the subgraph $G[a]$ contains at least one resolving node, i.e. $S \cap V(G[a]) \neq \emptyset$.
 \label{G[a]Ancor}
\end{observation}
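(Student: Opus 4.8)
The plan is to argue by contradiction, using that $\nu(a)$ separates $G[a]$ from the rest of $G$ together with the fact, established just above, that $h(a)\ge(2,2)$ for every a-node.

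First I would dispose of the case $a=r$: here $G[a]=G[r]=G$, and since every resolving set of a graph with at least one vertex is non-empty, we trivially get $S\cap V(G[a])=S\neq\emptyset$. So assume $a\neq r$, put $s:=\nu(a)$, and suppose for contradiction that $S\cap V(G[a])=\emptyset$; in particular $s\notin S$, and every vertex of $S$ lies in $V\setminus V(G[a])$.

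The heart of the argument is the observation that $\{s\}$ is \emph{not} an $s$-resolving set for $G[a]$: indeed, $h(a)\ge(2,2)$ says that a minimum $\nu(a)$-resolving set for $G[a]$ has at least two vertices, whereas if the distances to $s$ of the vertices of $V(G[a])\setminus\{s\}$ were pairwise distinct then $\{s\}$ alone would resolve $G[a]$ (it always separates $s$ from every other vertex). Hence there are distinct $u,v\in V(G[a])\setminus\{s\}$ with $d_{G[a]}(u,s)=d_{G[a]}(v,s)$. Because $s$ is a separation vertex that separates $V(G[a])\setminus\{s\}$ from $V\setminus V(G[a])$, these distances coincide with the distances in $G$, and every shortest path in $G$ from $u$ (or from $v$) to any vertex outside $G[a]$ runs through $s$. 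Therefore, for every $w\in S$,
\[
d_G(u,w)=d_G(u,s)+d_G(s,w)=d_G(v,s)+d_G(s,w)=d_G(v,w).
\]
Thus no vertex of $S$ resolves the pair $u,v$, contradicting the assumption that $S$ is a resolving set for $G$; hence $S\cap V(G[a])\neq\emptyset$.

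I expect the one delicate point to be the justification of the step ``$\{s\}$ is not an $s$-resolving set for $G[a]$'', i.e.\ of $h(a)\ge(2,2)$ itself: this rests on the structural fact that a child component of an a-node is never an ordinary leg, so that $G[a]$ is never merely a path with $\nu(a)$ at one end, and this in turn uses the precise definitions of bridge, leg, hooked leg and amalgamation vertex rather than any metric reasoning. Once that input is granted, the metric part above is short. As an alternative that avoids invoking $h(a)\ge(2,2)$ directly, one can run a case distinction on the children of $a$: if $a$ has at least two children, at least one of the corresponding connected components of $G_{|V\setminus\{s\}}$ meets $S$ by Lemma~\ref{ancorfreeSet}, contradicting $S\cap V(G[a])=\emptyset$ at once; if $a$ has a single child, that component is an EBC (use the two neighbours of $s$ on its $2$-connected part, both at distance $1$ from $s$, as the pair $u,v$) or a bridge (descend to the amalgamation vertex at its far end and repeat), the ordinary-leg case being impossible for a single child.
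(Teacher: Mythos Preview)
Your argument is correct and is precisely the reasoning the paper leaves implicit: the observation is stated immediately after the paragraph establishing $h(a)\ge(2,2)$, with the words ``what leads to the following observation'', and no separate proof is given. Your write-up simply spells out that implication---$\{\nu(a)\}$ cannot resolve $G[a]$, so two vertices of $G[a]$ share distance to $\nu(a)$, hence to every vertex outside $G[a]$---which is exactly the intended inference.
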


We will now show that this definition satisfys the properties in \ref{properties}.

\begin{theorem}\label{theorem1}
 For every a-node $a \in V_a$ with children $c_1, \ldots, c_k \in V_c$, $k \geq 1$, $h(a)$ can be computed from $h(c_1), \ldots ,h(c_k)$.
\end{theorem}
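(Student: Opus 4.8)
The plan is to prove the theorem by deriving an explicit closed formula for $h(a)=(\alpha,\beta)$ in terms of $h(c_1)=(\alpha_1,\beta_1),\dots,h(c_k)=(\alpha_k,\beta_k)$. Write $s$ for the amalgamation vertex of $G$ represented by $a$, so that $\alpha$ and $\beta$ are the sizes of a minimum non-gate-$s$-resolving set and of a minimum $s$-resolving set for $G[a]$, and similarly each $\alpha_i,\beta_i$ refers to $G[c_i]$. First I would record the structural picture: $G[a]$ is the gluing of $G[c_1],\dots,G[c_k]$ along the single vertex $s$. Each $G[c_i]-s$ is connected --- a short case distinction according to whether $c_i$ is an ordinary leg, a bridge, or an EBC, using that a biconnected subgraph stays connected after deleting one vertex and that whatever hangs below is attached to the rest --- so the sets $V_i:=V(G[c_i])\setminus\{s\}$ are exactly the connected components of $G[a]-s$, there is no edge of $G$ between distinct $V_i$, and $G_i:=G|_{V_i\cup\{s\}}=G[c_i]$. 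Since $s$ is a cut vertex, distances inside $V_i\cup\{s\}$ agree in $G$, in $G[a]$, and in $G_i$, so all notions relative to $G_i$ coincide with those relative to $G[c_i]$; thus we are exactly in the setting of Lemma \ref{maxOneGate} (and the lemmas it builds on) with separation vertex $s$.

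Next I would prove a gluing description of the relevant resolving sets. For $A\subseteq V(G[a])$ with $s\in A$, put $A_i:=(A\cap V_i)\cup\{s\}$; then $A=\bigcup_{i=1}^k A_i$ and $|A|=1-k+\sum_{i=1}^k|A_i|$. By the forward direction of Lemma \ref{maxOneGate} (which only uses that $A$ resolves $G[a]$ and that any path between different $V_i$ passes through $s$): if $A$ is an $s$-resolving set for $G[a]$, then every $A_i$ is an $s$-resolving set for $G_i$ and $s$ is an $A_i$-gate in $G_i$ for at most one index $i$; the reassembling direction of the same lemma gives the converse. To this I would add a gate-transfer claim: if $A$ is an $s$-resolving set for $G[a]$, then $s$ is an $A$-gate in $G[a]$ iff $s$ is an $A_i$-gate in $G_i$ for some $i$. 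This is immediate from the definition of an out-vertex together with distance-consistency: an out-vertex for $s$ in $G[a]$ lies in a unique $V_i$ and is then an out-vertex for $s$ in $G_i$ with respect to $A_i$, while an out-vertex $u\in V_i$ for $s$ in $G_i$ with respect to $A_i$ is an out-vertex for $s$ in $G[a]$ with respect to all of $A$, because every shortest path from $u$ to a vertex of $A\setminus V_i$ runs through $s$. (If $A\cap V_i=\emptyset$ then $A_i=\{s\}$, which forces $c_i$ to be an ordinary leg and $s$ to be an $A_i$-gate in $G_i$, so such an index is correctly charged as the one permitted gate.)

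Then, combining these, a minimum $s$-resolving set for $G[a]$ is obtained by choosing one ``distinguished'' index $i_0$ allowed to have $s$ as a gate, taking a minimum $s$-resolving set of $G_{i_0}$ (size $\beta_{i_0}$) and, for every $i\neq i_0$, a minimum non-gate-$s$-resolving set of $G_i$ (size $\alpha_i$), and reassembling; optimising over $i_0$ and using $\alpha_i-\beta_i\in\{0,1\}$ from Lemma \ref{eq}, this yields
\[
\beta\;=\;1-k+\sum_{i=1}^{k}\alpha_i\;-\;\max_{1\le i\le k}(\alpha_i-\beta_i).
\]
For $\alpha$, the gate-transfer claim forces $s$ to be a non-gate in every $G_i$, so a minimum non-gate-$s$-resolving set for $G[a]$ reassembles from minimum non-gate-$s$-resolving sets of all the $G_i$, giving $\alpha=1-k+\sum_{i=1}^{k}\alpha_i$. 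In each case the lower bound is immediate from the gluing description ($|A_i|\ge\alpha_i$, and $\ge\beta_i$ for the distinguished index), and the matching upper bound from the explicitly reassembled set; the boundary case $k=1$ is consistent, since then $G[a]=G[c_1]$ and the formula returns $h(c_1)$. As both components of $h(a)$ are now a fixed arithmetic expression in $h(c_1),\dots,h(c_k)$, they can be computed from them in $O(k)$ time, which is what the theorem asks.

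The hard part will be justifying that Lemma \ref{maxOneGate} (and, where it is used through it, Lemma \ref{withoutSep}) may be applied to $(G[a],s)$: the stated hypothesis requires that, when $G[a]-s$ has exactly two components, each of them meets every resolving set for $G[a]$. This can actually fail --- e.g.\ if one child is an ordinary leg, $G[a]$ may be a single path and one of its two arms can be left anchor-free --- so I would have to either check that the reassembling direction of Lemma \ref{maxOneGate} does not use this hypothesis (it does not: for two vertices in different $V_i,V_j$ one resolves them with $s$ unless they are equidistant from $s$, and then with any non-gate index, of which there are at least $k-1$), or dispose of the ``two children, one an ordinary leg'' configuration by a direct argument. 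A minor additional point is the bookkeeping around the exceptional index $A\cap V_i=\emptyset$ flagged above, which must be slotted in as the single permitted gate.
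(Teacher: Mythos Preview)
Your proposal is correct and follows essentially the same route as the paper: you derive the identical closed formulas $\alpha=1-k+\sum_i\alpha_i$ and $\beta=\alpha-\max_i(\alpha_i-\beta_i)$ (the paper writes the latter as the case split $\beta=\alpha$ if all $\alpha_i=\beta_i$, else $\alpha-1$), via the same gluing picture and the same appeal to Lemma~\ref{maxOneGate}. Your treatment is in fact more careful than the paper's in two places: you state and prove the gate-transfer claim explicitly (the paper only asserts the direction it needs), and you flag and resolve the issue that the $k=2$ hypothesis of Lemmas~\ref{withoutSep} and~\ref{maxOneGate} is not verified for $(G[a],s)$---the paper applies these lemmas without checking it, whereas you correctly observe that since you work throughout with $s$-resolving sets the step that removes $s$ (and hence Lemma~\ref{withoutSep}) is never needed.
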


\begin{proof}
 $k=1$: If $a$ has exactly one child $c$ then $h(a) = h(c)$. Since \textnu $(a) \in {\cal V} (c)$ and $c$ is the only child of $a$, we can follow that $G[a] = G[c]$. Therefore a minimum non-gate-\textnu $(a)$-resolving set for $G[c]$ is also a minimum non-gate-\textnu $(a)$-resolving set for $G[a]$. The same holds for a minimum \textnu $(a)$-resolving set.\\ \null
 
 $k\geq2$: Let $h(c_i) = (\alpha_i, \beta_i)$, $i \in \{1, \ldots, k\}$. Then $h(a) = (\alpha, \beta)$ with 
 \begin{itemize}
  \item $\alpha = (\sum_{i=1}^k \alpha_i) - (k-1) $
  \item $\beta = 
	\left\{\begin{array}{ll}
	  \alpha, & \text{if } \beta_i = \alpha_i \enspace \forall i  \\
	  \alpha - 1, & \text{else}
        \end{array}\right.$
 \end{itemize}

 Let $A_i$ be a minimum non-gate-\textnu $(a)$-resolving set for $G[c_i]$ and thus $|A_i| = \alpha_i$. Then every $A_i$ is also a minimum \textnu $(a)$-resolving set and there is no $i$, $i \in \{1, \ldots ,k\}$, such that $A_i$ is a \textnu $(a)$-gate in $A_i$. With the help of Lemma \ref{maxOneGate} it follows that $A := \bigcup_{i} A_i \setminus \{\text{\textnu} (a)\}$ is a minimum resolving set for $G[a]$ and from Lemma \ref{minSepSet} it follows that $A' := A \cup \{ \text{\textnu}(a) \}$ is a minimum \textnu $(a)$-resolving set for $G[a]$. Since there is no index $i$ such that \textnu $(a)$ is an $A_i$-gate in $G[c_i]$, it follows that \textnu $(a)$ is not an $A'$-gate in $G[a]$. Since \textnu $(a) \in A_i$ forall $i$ we have $|A'| = \alpha = \sum_{i} \alpha_i - (k-1) $.
 
 In a minimum \textnu $(a)$-resolving set $A''$ for $G[a]$ there is at most one index $i$ such that \textnu $(a)$ is a $A'' \cap V(G[c_i])$-gate in $G[c_i]$, otherwise there would be two vertices $u,v \in V(G[a])$ such that for every $w \in A''$ there is a shortest path to $w$ via \textnu $(a)$, i.e. $A''$ is not a resolving set (see Lemma \ref{ancorfreeSet}). Therefore $\beta = \alpha - 1$, if there is one index i such that $\beta_i < \alpha_i $ and $\beta = \alpha$ else.
 \end{proof}

\begin{theorem}\label{theorem2}
 For every c-node $c \in V_c$ with father $a \in V_a $ and children $a_1, \ldots, a_k \in V_a$, $k \geq 0$,  $h(c)$ can be computed from $h(a_1), \ldots, h(a_k)$ and $G_{|{\cal V}(c)}$.
\end{theorem}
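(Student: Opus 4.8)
The plan is to describe $G[c]$ explicitly and then glue together minimum resolving sets of its pieces, in the spirit of the proofs of Lemmas~\ref{maxOneGate} and~\ref{minSepSet} but along a whole component rather than a single separation vertex.

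\emph{Structure of $G[c]$.} Let $C := G_{|{\cal V}(c)}$ be the component represented by $c$ (an ordinary leg, a bridge, or an EBC). For each child $a_i$, the subgraph $G[a_i]$ is attached to $C$ at the amalgamation vertex \textnu$(a_i)\in{\cal V}(c)$, with $V(C)\cap V(G[a_i])=\{\text{\textnu}(a_i)\}$, $V(G[a_i])\cap V(G[a_j])=\emptyset$ for $i\neq j$, and \textnu$(a)\in{\cal V}(c)$. Each \textnu$(a_i)$ is a cut vertex of $G[c]$; hence distances inside $C$ are the same in $G[c]$, and every shortest path between a vertex of $G[a_i]\setminus\{\text{\textnu}(a_i)\}$ and a vertex outside $G[a_i]$ passes through \textnu$(a_i)$. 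If $k=0$ then $G[c]=C$ and $h(c)$ is read off directly from $G_{|{\cal V}(c)}$ ($h(c)=(2,1)$ for an ordinary leg; immediate for a bridge; a direct computation inside the EBC otherwise). So assume $k\ge 1$ and recall $h(a_i)=(\alpha_i,\beta_i)$ with $\alpha_i,\beta_i\ge 2$ and $\alpha_i\in\{\beta_i,\beta_i+1\}$ by Observation~\ref{G[a]Ancor} and Lemma~\ref{eq}.

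\emph{Reduction to a local problem on $C$.} Let $R$ be a \textnu$(a)$-resolving set for $G[c]$ and set $R_i:=(R\cap V(G[a_i]))\cup\{\text{\textnu}(a_i)\}$. Using the cut-vertex distance structure together with the arguments of Lemmas~\ref{ancorfreeSet}, \ref{withoutSep}, \ref{maxOneGate} (and that \textnu$(a)\in R$ always lies outside $G[a_i]$), one shows that $R$ resolves every pair inside $G[a_i]$ if and only if $R_i$ is a \textnu$(a_i)$-resolving set for $G[a_i]$; moreover $|R\cap(V(G[a_i])\setminus\{\text{\textnu}(a_i)\})|=|R_i|-1$, so that $|R|=\sum_{i=1}^{k}(|R_i|-1)+|R\cap{\cal V}(c)|$. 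For the remaining pairs, observe that every $r\in R\cap V(G[a_i])$, seen from any vertex outside $G[a_i]$, behaves exactly like \textnu$(a_i)$ for resolving purposes; in particular the $k$ vertices \textnu$(a_1),\ldots,\text{\textnu}(a_k)$ are available in the ``contracted'' set $R_C:=(R\cap{\cal V}(c))\cup\{\text{\textnu}(a_1),\ldots,\text{\textnu}(a_k)\}$ at no cost to $R$. The one extra effect of a piece is a \emph{gate}: whenever \textnu$(a_i)$ is an $R_i$-gate in $G[a_i]$, its out-vertices --- by Observations~\ref{adjOut} and~\ref{outOnPath} an initial segment of one shortest path leaving \textnu$(a_i)$ --- behave, seen from the rest of $G[c]$, like a pendant path attached at \textnu$(a_i)$ that contains no vertex of $R$. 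Letting $C^{*}$ be $C$ with such a pendant path added at every ``gate'' \textnu$(a_i)$ and nothing at the others, one obtains: $R$ is a \textnu$(a)$-resolving set for $G[c]$ iff every $R_i$ is a \textnu$(a_i)$-resolving set for $G[a_i]$ and $R_C$ is a \textnu$(a)$-resolving set for $C^{*}$; and $R$ is in addition non-gate-\textnu$(a)$ for $G[c]$ iff moreover \textnu$(a)$ is not an $R_C$-gate in $C^{*}$. (Unlike in Lemma~\ref{maxOneGate} there is no ``at most one gate'' restriction among the children, because all \textnu$(a_i)$ lie in $R_C$ and are therefore mutually resolved.)

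\emph{Computing $h(c)$.} By the size identity above, $h(c)=(\alpha,\beta)$ is obtained by minimising, over all subsets $I\subseteq\{1,\ldots,k\}$ of children put in ``gate mode'' (each such $i$ contributing $\beta_i-1$ and a pendant at \textnu$(a_i)$ in $C^{*}$) and the rest in ``non-gate mode'' (each contributing $\alpha_i-1$ and no pendant), the quantity $\bigl(\sum_{i\in I}(\beta_i-1)+\sum_{i\notin I}(\alpha_i-1)\bigr)+|S|$, where $S\subseteq{\cal V}(c)\setminus\{\text{\textnu}(a_1),\ldots,\text{\textnu}(a_k)\}$ is a smallest set containing \textnu$(a)$ such that $S\cup\{\text{\textnu}(a_1),\ldots,\text{\textnu}(a_k)\}$ resolves $C^{*}$ (for $\beta$), respectively with \textnu$(a)$ additionally not a gate in $C^{*}$ (for $\alpha$). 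Since $C$ is a single ordinary leg, bridge, or EBC, this last step is a bounded local computation on $C^{*}$. Minimality in both directions follows by the exchange arguments of Lemmas~\ref{maxOneGate}, \ref{minSepSet}, \ref{withoutSep}: a smaller \textnu$(a)$-resolving (or non-gate-\textnu$(a)$-resolving) set for $G[c]$ would, after restriction to the pieces, beat one of the optima defining $h(a_i)$ or the local optimum on $C^{*}$.

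\emph{Main obstacle.} The delicate point is the gate-to-pendant modelling in the second step: proving that the effect of a gate at \textnu$(a_i)$ on the rest of $G[c]$ is captured faithfully by a single pendant path at \textnu$(a_i)$, and --- crucially --- controlling its length so that the two numbers $\alpha_i,\beta_i$ stored in $h(a_i)$ really suffice. One must argue that in the only case where gate mode is profitable, namely $\alpha_i=\beta_i+1$, the relevant pendant can be taken of length one (or, alternatively, that longer pendants attached at a single vertex produce no new unresolved pair in $C^{*}$). Symmetric care is needed to translate the output requirement ``\textnu$(a)$ is not an $R$-gate in $G[c]$'' into ``\textnu$(a)$ is not an $R_C$-gate in $C^{*}$'', and one should check the remaining degenerate cases ($k=1$; a child with minimum possible $h(a_i)$; the $k=2$ side conditions of Lemmas~\ref{maxOneGate} and~\ref{withoutSep} at each \textnu$(a_i)$, which hold since each attached piece contains a resolving vertex by Observation~\ref{G[a]Ancor}).
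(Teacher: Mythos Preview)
Your approach differs from the paper's in its central technical device. You model a gate child by grafting a pendant path onto $C$ and then solve a resolving problem on the augmented graph $C^{*}$, iterating over subsets $I$ of ``gate-mode'' children. The paper instead proves a direct characterisation (Lemma~\ref{thLemma}): $R$ is a \textnu$(a)$-resolving set for $G[c]$ if and only if each $R_i$ resolves $G[a_i]$, the set $R^{*}:=(R\cap{\cal V}(c))\cup\{\text{\textnu}(a_1),\dots,\text{\textnu}(a_k)\}$ resolves $G_{|{\cal V}(c)}$, and for every $i$ the vertex \textnu$(a_i)$ is \emph{not} an $R_i$-gate in $G[a_i]$ \emph{or not} an $R^{*}$-gate in $G_{|{\cal V}(c)}$. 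This disjunctive gate condition replaces your pendant construction entirely: no $C^{*}$ is built, and the algorithm simply iterates over resolving subsets $W\subseteq{\cal V}(c)$ containing \textnu$(a),\text{\textnu}(a_1),\dots,\text{\textnu}(a_k)$, checks for each $i$ whether \textnu$(a_i)$ is a $W$-gate in $G_{|{\cal V}(c)}$, and charges $\alpha_i$ if so and $\beta_i$ otherwise.

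The obstacle you flag --- controlling the pendant length from only $(\alpha_i,\beta_i)$ --- is genuine in your formulation but dissolves once you see the equivalence with the paper's condition. Because \textnu$(a_i)\in R_C$, any $v\in{\cal V}(c)$ unresolved from a length-$1$ pendant leaf $p$ must satisfy $d(v,\text{\textnu}(a_i))=1$ and $d(v,w)=1+d(\text{\textnu}(a_i),w)$ for all $w\in R_C$, i.e.\ $v$ witnesses that \textnu$(a_i)$ is an $R_C$-gate in $C$; conversely any such witness is unresolved from $p$. Thus ``$R_C$ resolves a length-$1$ pendant at \textnu$(a_i)$'' is precisely ``\textnu$(a_i)$ is not an $R_C$-gate in $C$'', and this single condition already forces every out-vertex of $G[a_i]$ at \emph{any} depth to be resolved from every vertex of $C$ (for each $v$ there is $w\in R_C$ with $d(v,w)<d(v,\text{\textnu}(a_i))+d(\text{\textnu}(a_i),w)$). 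Cross-child out-vertex pairs are handled by \textnu$(a_i),\text{\textnu}(a_j)\in R_C$. So length-$1$ pendants suffice and your scheme can be completed; but the paper's disjunctive formulation is cleaner, avoids the detour through $C^{*}$, and makes the minimality exchange argument immediate, whereas in your outline minimality is only asserted by analogy with Lemmas~\ref{maxOneGate}--\ref{minSepSet} and would still need to be written out.
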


To proof this theorem, we need the following lemma:

\begin{lemma}\label{thLemma}
 Let $c \in V_c$ be a c-node with father $a \in V_a$ and children $a_1, \ldots, a_k \in V_a$, $k \geq 0$. Let $R \subseteq V(G[c])$ with \textnu $(a) \in R$. Let $R_i := (R \cap V(G[a_i])) \cup \{ \text{\textnu }(a_i) \}$, $i \in \{1, \ldots, k\}$, and $R^* := (R \cap {\cal V}(c))) \cup \{ \text{\textnu }(a_i) \; | \; 1 \leq i \leq k \}$. $R$ is a \textnu $(a)$-resolving set for $G[c]$ if and only if 
 \begin{enumerate}
  \item $R_i$ is a resolving set for $G[a_i]$ and
  \item $R^*$ is a \textnu $(a)$-resolving set for $G_{|{\cal V}(c)}$ and 
  \item For every $i \in \{1, \ldots , k\}$ vertex \textnu $(a_i)$ is not an $R_i$-gate in $G[a_i]$ or not an $R^*$-gate in $G_{|{\cal V}(c)}$.
 \end{enumerate}
\end{lemma}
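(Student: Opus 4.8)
The plan is to derive the equivalence from the cut-vertex structure of $G[c]$: each \textnu$(a_i)$ is a cut vertex whose removal detaches $V(G[a_i])\setminus\{\text{\textnu}(a_i)\}$ from the rest, and ${\cal V}(c)$ together with the sets $V(G[a_i])\setminus\{\text{\textnu}(a_i)\}$, $i=1,\dots,k$, partitions $V(G[c])$. I would first record, as immediate consequences of this structure and of $V(G[a_i])\cap{\cal V}(c)=\{\text{\textnu}(a_i)\}$, three distance facts: (i) $G[a_i]$ and $G_{|{\cal V}(c)}$ are geodesically closed in $G[c]$, so $d_{G[c]}$ agrees with $d_{G[a_i]}$ on $V(G[a_i])$ and with $d_{G_{|{\cal V}(c)}}$ on ${\cal V}(c)$; (ii) if $u\in V(G[a_i])\setminus\{\text{\textnu}(a_i)\}$ and $w\notin V(G[a_i])$, then $d_{G[c]}(u,w)=d_{G[a_i]}(u,\text{\textnu}(a_i))+d_{G[c]}(\text{\textnu}(a_i),w)$; and (iii) $d_{G[c]}(\text{\textnu}(a_i),w)=d_{G_{|{\cal V}(c)}}(\text{\textnu}(a_i),\text{\textnu}(a_l))+d_{G[a_l]}(\text{\textnu}(a_l),w)$ for $w\in V(G[a_l])$, $l\ne i$. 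Two further remarks are used throughout: since every \textnu$(a_i)$-resolving set for $G[a_i]$ has at least two vertices (the discussion preceding Observation~\ref{G[a]Ancor}), condition~1 forces $R\cap V(G[a_i])\ne\emptyset$ for each $i$; and \textnu$(a)\in R\cap{\cal V}(c)$, hence \textnu$(a)\in R^*$ and \textnu$(a)\notin V(G[a_i])$ for each $i$.

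For the implication ``$\Rightarrow$'', suppose $R$ is a \textnu$(a)$-resolving set for $G[c]$. For conditions~1 and~2, take any pair lying inside a single $G[a_i]$ (resp.\ inside ${\cal V}(c)$) and a vertex $w\in R$ resolving it in $G[c]$; by (i) we are done if $w$ lies in the same part, and otherwise (ii) shows that \textnu$(a_i)\in R_i$ (resp.\ \textnu$(a_l)\in R^*$) already resolves the pair; together with \textnu$(a)\in R^*$ this gives that $R_i$ is a resolving set for $G[a_i]$ and $R^*$ a \textnu$(a)$-resolving set for $G_{|{\cal V}(c)}$. For condition~3, suppose for contradiction that for some $i$ the vertex \textnu$(a_i)$ is simultaneously an $R_i$-gate in $G[a_i]$ and an $R^*$-gate in $G_{|{\cal V}(c)}$. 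By Observation~\ref{adjOut} there are out-vertices $u_1\in V(G[a_i])$ and $u_2\in{\cal V}(c)$, each adjacent to \textnu$(a_i)$; combining the two gate equations with (i)--(iii) and \textnu$(a_l)\in R^*$ yields $d_{G[c]}(u_1,w)=1+d_{G[c]}(\text{\textnu}(a_i),w)=d_{G[c]}(u_2,w)$ for every $w\in R$, contradicting that $R$ resolves $G[c]$.

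For the implication ``$\Leftarrow$'', assume conditions~1--3 and take distinct $u,v\in V(G[c])$; four cases arise from the partition. (I) If $u,v$ both lie in some $V(G[a_i])$, condition~1 gives a resolver in $R_i$; if that resolver is \textnu$(a_i)$, then (i)--(ii) show that \textnu$(a)\in R$, which lies outside $G[a_i]$, resolves the pair. (II) If $u,v\in{\cal V}(c)$, condition~2 gives a resolver in $R^*$; if it is some \textnu$(a_j)$ not in $R$, then any vertex of the nonempty set $R\cap V(G[a_j])$ resolves the pair by (ii). (III) If $u\in V(G[a_i])\setminus\{\text{\textnu}(a_i)\}$ and $v\in V(G[a_j])\setminus\{\text{\textnu}(a_j)\}$ with $i\ne j$, then assuming the pair unresolved and testing against $w_0\in R\cap V(G[a_i])$ and $w_0'\in R\cap V(G[a_j])$ forces, via (i)--(ii), both $d_{G[c]}(u,\text{\textnu}(a_i))\ge d_{G[c]}(v,\text{\textnu}(a_j))+d_{G[c]}(\text{\textnu}(a_i),\text{\textnu}(a_j))$ and the symmetric inequality, which is impossible because $d_{G[c]}(\text{\textnu}(a_i),\text{\textnu}(a_j))\ge 1$.

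The remaining and hardest case is (IV): $u\in{\cal V}(c)\setminus\{\text{\textnu}(a_i)\}$, $v\in V(G[a_i])\setminus\{\text{\textnu}(a_i)\}$. Assuming the pair unresolved, testing against $w_0\in R\cap V(G[a_i])$ gives $d_{G[c]}(u,\text{\textnu}(a_i))\le d_{G[c]}(v,\text{\textnu}(a_i))$. If the inequality is strict, the triangle inequality gives $d_{G[c]}(u,\text{\textnu}(a))<d_{G[c]}(v,\text{\textnu}(a))$, so \textnu$(a)\in R$ resolves the pair --- a contradiction. If it is an equality, then $v$ is an out-vertex witnessing that \textnu$(a_i)$ is an $R_i$-gate in $G[a_i]$, so by condition~3 it is not an $R^*$-gate in $G_{|{\cal V}(c)}$; since $u\in{\cal V}(c)\setminus\{\text{\textnu}(a_i)\}$ this produces a $w\in R^*$ with $d_{G_{|{\cal V}(c)}}(u,w)<d_{G_{|{\cal V}(c)}}(u,\text{\textnu}(a_i))+d_{G_{|{\cal V}(c)}}(\text{\textnu}(a_i),w)$, and translating this into $G[c]$ via (i)--(iii) --- replacing $w=\text{\textnu}(a_l)$ by a vertex of $R\cap V(G[a_l])$ when $w\notin R$ --- yields a vertex of $R$ separating $u$ and $v$, again a contradiction. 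I expect case (IV) to be the main obstacle: the key is to recognise that the unique scenario in which \textnu$(a_i)$ itself fails to separate $u$ from $v$ inside $G[a_i]$, namely equality of their distances to \textnu$(a_i)$, is exactly the scenario making \textnu$(a_i)$ an $R_i$-gate with out-vertex $v$, so that condition~3 becomes available; the rest is bookkeeping with the auxiliary vertices \textnu$(a_l)\in R^*\setminus R$, where the nonemptiness of $R\cap V(G[a_l])$ is needed repeatedly, together with the systematic use of the distance facts (i)--(iii) and the triangle inequality.
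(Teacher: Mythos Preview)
Your argument is correct and follows essentially the same route as the paper: the same cut-vertex distance identities, the same proof of conditions~1--3 in the forward direction via Observation~\ref{adjOut}, and the same four-case analysis in the backward direction. The only notable difference is in case~(IV)/(d): the paper immediately invokes the ``WLOG'' branch of condition~3, whereas you first pin down $d_{G[c]}(u,\text{\textnu}(a_i))\le d_{G[c]}(v,\text{\textnu}(a_i))$ and then show that equality forces \textnu$(a_i)$ to be an $R_i$-gate, which is a slightly cleaner way to see which half of condition~3 is actually being used.
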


\begin{proof} 
"$\Rightarrow$":
Let $R$ be a \textnu $(a)$-resolving set for $G[c]$. 
\begin{enumerate}
 \item We show that $R_i$ is a resolving set for $G[a_i]$. Let $u,v \in V(G[a_i])$. Pair $u,v$ is either resolved by a vertex $w_i \in V(G[a_i]) \cap R$, or by a vertex $w \in {\cal V}(c) \cap R$, or by a vertex $w_j \in V(G[a_j]) \cap R$, $j \in \{ 1, \ldots , k \}$, $j \neq i$. Since every path from a vertex of $V(G[a_i])$ to $w$ or to $w_j$ contains vertex \textnu $(a_i)$, vertex $w$ or $w_j$ resolves $u,v$ if and only if \textnu $(a_i)$ resolves $u,v$ , see Figure \ref{firstDir1}. Thus, set $R \cap V(G[a_i]) \cup \{ \text{\textnu } (a_i) \} = R_i$ resolves all pairs in $V(G[a_i])$.
 
 \item We show that $R^*$ is a \textnu $(a)$-resolving set for $G_{|{\cal V}(c)}$. Since \textnu $(a) \in R^*$ by definition, we just have to show that $R^*$ is a resolving set. Let $u,v \in {\cal V}(c)$. Pair $u,v$ is either resolved by a vertex $w \in {\cal V}(c) \cap R$ or by a vertex $w_j \in V(G[a_j]) \cap R$, $j \in \{1, \ldots , k\}$. Since every path from a vertex in ${\cal V}(c)$ to $w_j$ contains vertex \textnu $(a_j)$, vertex $w_j$ resolves $u,v$ if and only if \textnu $(a_j)$ resolves $u,v$, see Figure \ref{firstDir2}. Thus, set $R \cap {\cal V}(c) \cup \{ \text{\textnu } (a_j) \; | \;  1\leq j \leq k, k \geq 0 \} = R^*$ resolves all pairs in ${\cal V}(c)$.
 
 \item We show that for every $i \in \{1, \ldots , k\}$ vertex \textnu $(a_i)$ is not an $R_i$-gate in $G[a_i]$ or not an $R^*$-gate in $G_{|{\cal V}(c)}$. Assume there is an index $i \in \{1, \ldots ,k\}$ such that vertex \textnu $(a_i)$ is an $R_i$-gate in $G[a_i]$ and an $R^*$-gate in $G_{|{\cal V}(c)}$. From Observation \ref{adjOut} we know that there is an out-vertex $v_i \in V(G[a_i])$ with respect to $R_i$ adjacent to \textnu $(a_i)$ and an out-vertex $v \in {\cal V}(c)$ with respect to $R^*$ adjacent to \textnu $(a_i)$. That is there are two vertices $v_i$ and $v$ with the same distance to \textnu $(a_i)$, such that from both, there is a shortest path to every vertex in $R_i \cup R^*$ that contains \textnu $(a_i)$. This implies that pair $v_i,v$ cannot be resolved by a vertex in $R':= R \cap (V(G[a_i]) \cup {\cal V}(c)$ and therefore must be resolved by another vertex in $R \setminus R'$, i.e. by a vertex in $V(G[a_j])$, $j \in \{1, \ldots, k\}$, $j \neq i$. However, if there is a vertex in $V(G[a_j])$, that resolves pair $v_i,v$, then vertex \textnu $(a_j) \in R^*$ also resolves pair $v_i, v$. This implies that there is no $j \in \{1, \ldots, k\}$, $j \neq i$, such that a vertex in $V(G[a_j])$ resolves $v, v_i$ and therefore $R$ is not a \textnu $(a)$-resolving set for $G[c]$, what contradicts the assumption.\\ \null
 \end{enumerate}

 \begin{figure}[]
 \centering
    \begin{subfigure}[t]{0.45\textwidth}
        \includegraphics[width=\textwidth]{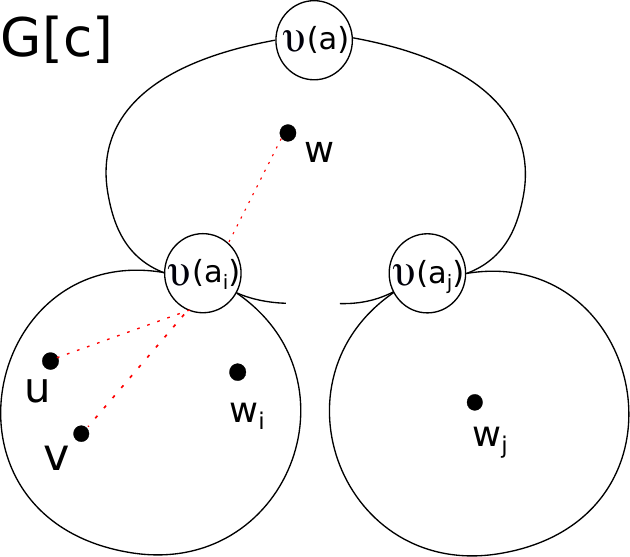}
    \caption{
      \label{firstDir1} The Figure shows graph $G[c]$ with subgraph $G[a_i]$ and $G[a_j]$ and vertex pair $u,v \in V(G[a_i]$ that needs to be resolved. They can be resolved by a vertex $w_i \in  R \cap V(G[a_i])$ or by a vertex $w \in R \cap {\cal V}(c)$ or by a vertex $w_j \in R \cap V(G[a_j])$.}
    \end{subfigure}
    \quad 
     \begin{subfigure}[t]{0.45\textwidth}
        \includegraphics[width=\textwidth]{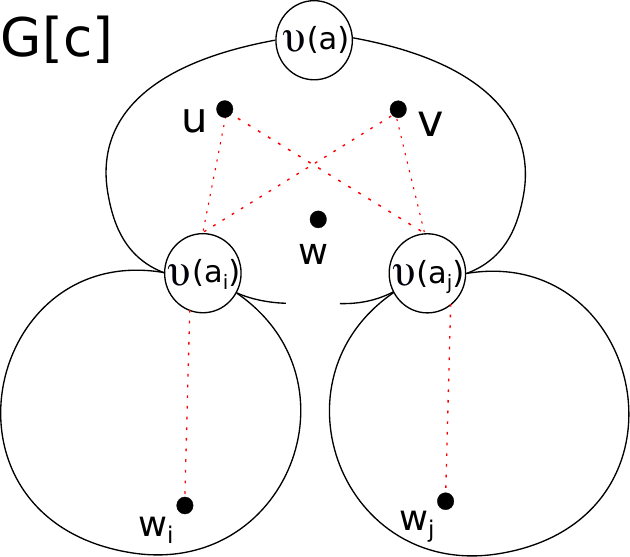}
    \caption{
      \label{firstDir2} The Figure shows graph $G[c]$ with subgraph $G[a_i]$ and $G[a_j]$ and vertex pair $u,v \in {\cal V}(c)$ that needs to be resolved. They can be resolved by a vertex $w_i \in  R \cap V(G[a_i]$ or by a vertex $w \in R \cap {\cal V}(c)$. }
    \end{subfigure} 
     \caption{}
\end{figure}

 "$\Leftarrow$":
Let 
 \begin{enumerate}
  \item $R_i$ be a resolving set for $G[a_i]$ 
  \item $R^*$ be a \textnu $(a)$-resolving set for $G_{|{\cal V}(c)}$  
  \item for every $i \in \{1, \ldots , k\}$ vertex \textnu $(a_i)$ be not an $R_i$-gate in $G[a_i]$ or not an $R^*$-gate in $G_{|{\cal V}(c)}$.
 \end{enumerate}
 We show that $R$ is a \textnu $(a)$-resolving set.
 $R$ is a \textnu $(a)$-resolving set, if \textnu $(a) \in R$ and if for every pair $u,v \in V(G[c])$ there is a vertex in $R$, that resolves $u$ and $v$. Obviously \textnu $(a) \in R$, because  \textnu $(a) \in R^*$, so we just have to show that $R$ is a resolving set.
 We divide between the follwing four cases:
    \begin{itemize}
      \item [(a)] $u,v \in V(G[a_i])$, $i \in \{1, \ldots , k\}$,
      \item [(b)] $u \in V(G[a_i])$, $v \in V(G[a_j])$, $i,j \in \{1, \ldots , k\}$, $i \neq j$,
      \item [(c)] $u,v \in {\cal V}(c)$,
      \item [(d)] $u \in V(G[a_i])$ and $v \in {\cal V}(c)$, $i \in \{1, \ldots , k\}$.
    \end{itemize}
     
We will have a closer look at all these cases.
 \begin{itemize}
      \item [(a)] Let $u,v \in V(G[a_i])$. Pair $u,v$ is either resolved by a vertex $w_i \in R_i \setminus \{\text{\textnu } (a_i)\} \subseteq R$, or by vertex \textnu $(a_i)$. Since $R \cap {\cal V}(c) \neq \emptyset$ and every pair that is resolved by \textnu $(a_i)$ is also resolved by a vertex in $R \cap {\cal V}(c)$ (\textnu $(a) \in R$), $R$ resolves all pairs $u,v \in V(G[a_i])$, see Figure \ref{secondDir1}.
      
      \item [(b)] Let $u \in V(G[a_i])$ and $v \in V(G[a_j])$. Pair $u,v$ is either resolved by $w_i \in R_i \setminus \{\text{\textnu } (a_i)\} \subseteq R$ or by $w_j \in R_j \setminus \{\text{\textnu } (a_j)\} \subseteq R$, because it is not possible that $u$ and $v$ have the same distance to both, $w_i$ and $w_j$, what can be seen as follows:\\
      Let $a:= d_{G[c]} (u,w_i)$, $b:= d_{G[c]} (u,\text{\textnu }(a_i))$, $c:= d_{G[c]} (w_i,\text{\textnu }(a_i))$,     $a':= d_{G[c]} (v,w_j)$, $b':= d_{G[c]} (v,\text{\textnu }(a_j))$, $c':= d_{G[c]} (w_i,\text{\textnu }(a_j))$ and $d:= d_{G[c]} (\text{\textnu }(a_i),\text{\textnu }(a_j))$, (see Figure \ref{secondDir2}). Since $w_i \neq w_j$ we can follow $d>0$. Assume that neither $w_i$ nor $w_j$ separate $u,v$, i.e. $a = c+d+b'$ and $a'= b+d+c'$. From $a \leq b+c$ and $a\leq b' + c'$ we get $c+d+b' \leq b+c$ and $b+d+c' \leq b'+c'$. Finally we have $d+b' \leq b'-d$ what implies $d \leq -d$ and thus $d=0$. This contradicts the assumption $d>0$. 
      It follows that $\bigcup_{i=1}^k R_i \setminus \{ \text{\textnu } (a_i)\} \subseteq R$ resolves all pairs $u,v$ with $u \in V(G[a_i])$, $v \in V(G[a_j])$.
      
      \item [(c)] Let $u,v \in {\cal V}(c)$. Pair $u,v$ is either resolved by a vertex $w \in R^* \setminus \{ \text{\textnu } (a_i) \} \; | \;  i \in \{1, \ldots , k\}, k \geq 0 \} \subseteq R$ or by a vertex in $\{ \text{\textnu } (a_i) \} \; | \;  i \in \{1, \ldots , k\}, k \geq 0 \}$. Since $R \; \cap \; V(G[a_i]) \neq \emptyset$ for every $i \in \{1, \ldots , k\}, k \geq 0 $ (see Observation \ref{G[a]Ancor}) and every pair that is resolved by \textnu $(a_i)$ is also resolved by a vertex in $R \; \cap \; V(G[a_i])$, $R$ resolves all pairs $u,v \in V(G[a_i])$, see Figure \ref{secondDir3}.    
      
      \item [(d)] Let $u \in V(G[a_i])$ and $v \in {\cal V}(c)$. Since for every $j \in \{1, \ldots , k\}$ vertex \textnu $(a_j)$ is not an $R_j$-gate in $G[a_j]$ or not an $R^*$-gate in $G_{|{\cal V}(c)}$, pair $u,v$ is either resolved by a vertex $w_i \in R_i \setminus \{\text{\textnu } (a_i)\} \neq \emptyset$ (see Observation \ref{G[a]Ancor}) or by a vertex $w \in R^* \setminus \{ \text{\textnu } (a_i) \; | \;  i \in \{1, \ldots , k\}, k \geq 0 \} \neq \emptyset$ (\textnu $(a) \in R^*$), what can be seen as follows:\\
      Assume pair $u,v$ cannot be resolves by a vertex $w_i \in R_i \setminus \{ \text{\textnu } (a_i) \}$ and cannot be resolved by a vertex $w \in R^* \setminus \{ \text{\textnu } (a_i) \; | \;  i \in \{1, \ldots , k\}, k \geq 0\}$. Without loss of generality let \textnu $(a_i)$ be no $R_i$-gate in $G[a_i]$. Then there is a vertex $w_i \in R_i \setminus \{\text{\textnu } (a_i) \}$ such that there is no shortest path from $u$ to $w_i$ via vertex \textnu $(a_i)$.
      Therefore $d_{G[c]}(u,w_i)< d_{G[c]}(u,\text{\textnu } (a_i)) +d_{G[c]}(\text{\textnu } (a_i),w_i)$. Since $w_i$ does not resolve pair $u,v$, we get $d_{G[c]}(v, w_i) = d_{G[c]}(v, \text{\textnu } (a_i)) + d_{G[c]}(\text{\textnu } (a_i),w_i) = d_{G[c]}(u, w_i) < d_{G[c]}(u,\text{\textnu } (a_i)) +d_{G[c]}(\text{\textnu } (a_i),w_i)$. It follows that $d_{G[c]}(v, \text{\textnu } (a_i)) < d_{G[c]}(u, \text{\textnu } (a_i))$.\\
      Vertex $w$ does not resolve pair $u,v$ either, that means $d_{G[c]}(v, w) = d_{G[c]}(v, \text{\textnu } (a_i)) + d_{G[c]}(\text{\textnu } (a_i),w) = d_{G[c]}(u, w) \leq d_{G[c]}(u,\text{\textnu } (a_i)) + d_{G[c]}(\text{\textnu } (a_i),w)$. It follows $d_{G[c]}(v, \text{\textnu } (a_i)) \linebreak \leq d_{G[c]}(u, \text{\textnu } (a_i))$, what contradicts the asumption.
 \end{itemize}

 \begin{figure}[]
 \centering
    \begin{subfigure}[t]{0.45\textwidth}
        \includegraphics[width=\textwidth]{lemma1}
    \caption{
      \label{secondDir1} The Figure shows graph $G[c]$ with subgraph $G[a_i]$ and $G[a_j]$ and vertex pair $u,v \in V(G[a_i]$ that needs to be resolved. They are either resolved by a vertex $w_i \in R_i \setminus \{\text{\textnu } (a_i)\} \subseteq R$, or by vertex \textnu $(a_i)$. Since $R \cap {\cal V}(c) \neq \emptyset$ and every pair that is resolved by \textnu $(a_i)$ is also resolved by a vertex in $R \cap {\cal V}(c)$, $R$ resolves $u,v$.}
    \end{subfigure}
    \quad 
     \begin{subfigure}[t]{0.45\textwidth}
        \includegraphics[width=\textwidth]{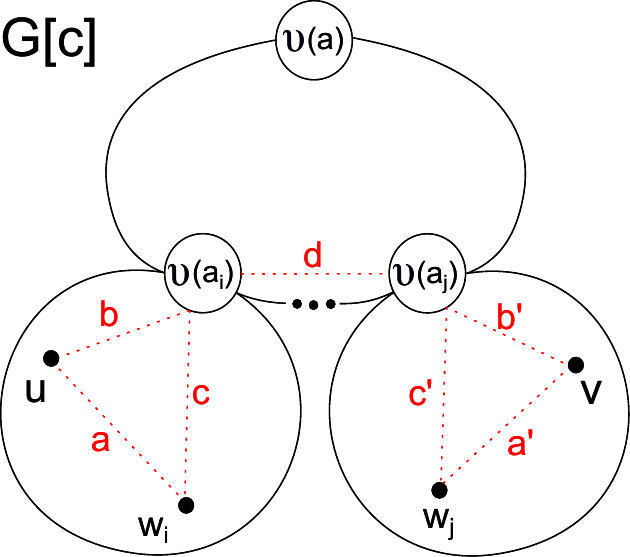}
    \caption{
      \label{secondDir2} The Figure shows graph $G[c]$ with subgraph $G[a_i]$ and $G[a_j]$ and vertex pair $u,v$ with $u \in V(G[a_i])$ and $v \in {\cal V}(c)$ that needs to be resolved. Since $d>0$, either $w_i$ or $w_j$ resolves pair $u,v$.}
    \end{subfigure}  
    \quad
    \begin{subfigure}[t]{0.45\textwidth}
        \includegraphics[width=\textwidth]{lemma2}
    \caption{
      \label{secondDir3} The Figure shows graph $G[c]$ with subgraph $G[a_i]$ and $G[a_j]$ and vertex pair $u,v \in {\cal V}(c)$ that needs to be resolved. They are either resolved by a vertex $w \in R^* \setminus \{ \text{\textnu } (a_i) \} \; | \;  i \in \{1, \ldots , k\}, k \geq 0 \} \subseteq R$ or by a vertex in $\{ \text{\textnu } (a_i) \} \; | \;  i \in \{1, \ldots , k\}, k \geq 0 \}$. Since $R \; \cap \; V(G[a_i]) \neq \emptyset$ for every $i \in \{1, \ldots , k\}, k \geq 0 $ (see Observation \ref{G[a]Ancor}) and every pair that is resolved by \textnu $(a_i)$ is also resolved by a vertex in $R \; \cap \; V(G[a_i])$, $R$ resolves all pairs $u,v \in V(G[a_i])$.}
    \end{subfigure}
    \quad 
     \begin{subfigure}[t]{0.45\textwidth}
        \includegraphics[width=\textwidth]{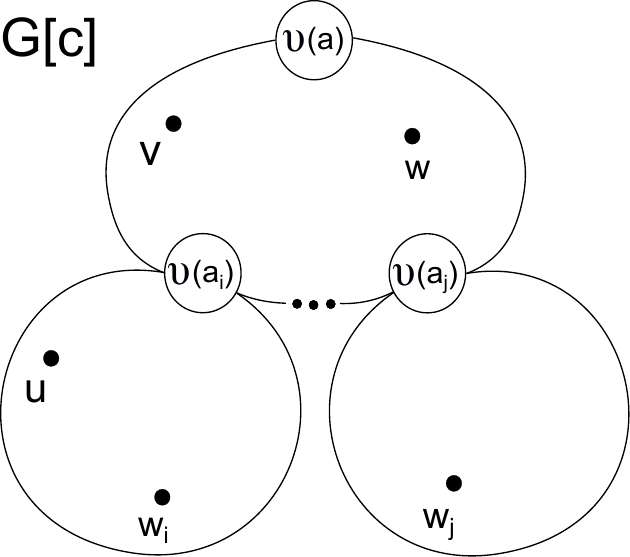}
    \caption{
      \label{secondDir4} The Figure shows graph $G[c]$ with subgraph $G[a_i]$ and $G[a_j]$ and vertex pair $u,v$ with $u \in V(G[a_i])$ and $v \in {\cal V}(c)$ that needs to be resolved. Since for every $i \in \{1, \ldots , k\}$ vertex \textnu $(a_i)$ is no $R_i$-gate in $G[a_i]$ or vertex \textnu $(a_i)$ is no $R^*$-gate in $G_{|{\cal V}(c)}$, pair $u,v$ is either resolved by a vertex $w_i \in R_i \setminus \{\text{\textnu } (a_i)\} \neq \emptyset$ or by a vertex $w \in R^* \setminus \{ \text{\textnu } (a_i) \; | \;  i \in \{1, \ldots , k\}, k \geq 0 \} \neq \emptyset$ (\textnu $(a) \in R^*$).}
    \end{subfigure} 
    \caption{}
\end{figure}
\end{proof}

\begin{proof}{\bf of Theorem \ref{theorem2}}
Graph $G[c]$ is composed by the graph $G_{|{\cal V}(c)}$ and the graphs $G[a_i]$, $i \in \{1, \ldots , k\}$. We compute $h(c) =(\alpha, \beta)$ by computing a minimum-non-gate-\textnu $(a)$-resolving set $A$ for $G[c]$ with $|A| = \alpha$ and a minimum \textnu $(a)$-resolving set $B$ for $G[c]$ with $|B|=\beta$ with the help of Lemma \ref{thLemma}.

Let $A_i$ be a minimum-non-gate-\textnu $(a_i)$-resolving set for $G[a_i]$ and $B_i$ be a minimum-\textnu $(a_i)$-resolving set for $G[a_i]$, $i \in \{1, \ldots , k\}$. To compute sets $A$ and $B$ and thus $\alpha$ and $\beta$ we can do the following:

For every subset $W \subseteq {\cal V}(c)$ that contains vertices \textnu $(a)$, $\text{\textnu }(a_1), \ldots , \text{\textnu }(a_k)$ and resolves all pairs $u,v \in {\cal V}(c)$ we determine a resolving set $R_W$ for $G[c]$. 
$R_W$ contains the vertices in $W$ and for every $i \in \{1, \ldots ,k\}$ either the vertices in $A_i$ or in $B_i$. If vertex \textnu $(a_i)$ is a $W$-gate in $G_{|{\cal V}(c)}$ then $R_W$ contains the vertices in $B_i$ else the vertices in $A_i$. $R_W$ is a \textnu $(a)$-resolving set for $G[c]$ (Lemma \ref{thLemma}) and by Lemma \ref{withoutSep} we get that $R'_W := R_W \setminus \{\text{\textnu }(a_1), \ldots , \text{\textnu }(a_k)\}$ is a  \textnu $(a)$-resolving set for $G[c]$. Vertex set $R'_W$ is a smallest \textnu $(a)$-resolving set for $G[c]$ with the property ${\cal V}(c) \cap R'_W = W$ for a given $W$, that contains at least the vertices of $W$. \\ \null 

Then we have 
$B = \min\{R'_W \; | \; W \subseteq {\cal V}(c) \text{ is a resolving set for } G_{|{\cal V}(c)} \text{ with } \text{\textnu} (a), \text{\textnu }(a_1),\\ \text{\textnu }(a_2), \ldots, \text{\textnu }(a_k) \in W \}$ with $\beta = |B|$ and
$A= \min\{R'_W \; | \; W \subseteq {\cal V}(c) \text{ is a resolving set for  }\\ G_{|{\cal V}(c)} \text{ with } \text{\textnu} (a), \text{\textnu }(a_1), \ldots , \text{\textnu }(a_k) \in W \text{ and \textnu } (a) \text{ is not an $W$-gate in $G[c]$} \}$ with $\alpha = |A|$.
\end{proof}

\begin{theorem}
 The metric dimension of $G[r]$ can efficiently be computed from $h(r)$. 
\end{theorem}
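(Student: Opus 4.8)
The plan is to show that the metric dimension of $G[r]$ equals $\beta-1$, where $h(r)=(\alpha,\beta)$; since $h(r)$ has already been produced by the bottom-up pass using Theorems~\ref{theorem1} and~\ref{theorem2}, this makes the final answer available in constant time. By the standing assumptions $r$ is an a-node, so \textnu $(r)$ is an amalgamation vertex of $G=G[r]$ and hence a separation vertex, and $r$ has at least two children $c_1,\ldots,c_k$, so deleting \textnu $(r)$ splits $G[r]$ into $k\ge 2$ connected components; write $V_1,\ldots,V_k$ for their vertex sets, with $V_i$ the one contained in $G[c_i]$. To apply Lemmas~\ref{withoutSep} and~\ref{minSepSet} to the separation vertex \textnu $(r)$ I first need their side hypothesis: when $k=2$, every resolving set of $G[r]$ contains a vertex of $V_1$ and a vertex of $V_2$.

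I would verify this from the decomposition. Assume $k=2$. Then neither $G[c_1]$ nor $G[c_2]$ is an ordinary leg, since removing the root of an ordinary leg produces more than two components; hence each $G[c_i]$ is a bridge component or an EBC (possibly with further components hanging below it). Consequently $V_i$ is not a path issuing from \textnu $(r)$: if $G[c_i]$ is an EBC then \textnu $(r)$ lies in its biconnected part and has two neighbours there, each at distance $1$ from \textnu $(r)$; if $G[c_i]$ is a bridge component and all of $V_i$ lay on a single \textnu $(r)$-rooted path, that path would end in a degree-one vertex, forcing $G[c_i]$ to be a leg rooted at \textnu $(r)$ and contradicting that its first edge is a bridge. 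So $V_i$ contains two distinct vertices $x,y$ with $d_G(x,\text{\textnu }(r))=d_G(y,\text{\textnu }(r))$, and since every path from \textnu $(r)$, or from a vertex of $V_j$ with $j\neq i$, to $x$ or to $y$ passes through \textnu $(r)$, the computation in the proof of Lemma~\ref{ancorfreeSet} shows that no such vertex resolves $x,y$. Hence every resolving set of $G[r]$ meets both $V_1$ and $V_2$, as required.

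With the side hypothesis in hand, one step finishes the proof. Let $A$ be a minimum resolving set for $G[r]$. By Lemma~\ref{withoutSep}, \textnu $(r)\notin A$ (otherwise $A\setminus\{\text{\textnu }(r)\}$ would be a strictly smaller resolving set), and by Lemma~\ref{minSepSet} the set $A\cup\{\text{\textnu }(r)\}$ is a minimum \textnu $(r)$-resolving set for $G[r]$. By Definition~\ref{h}, the second coordinate $\beta$ of $h(r)$ is exactly the size of a minimum \textnu $(r)$-resolving set for $G[r]$, hence $\beta=|A\cup\{\text{\textnu }(r)\}|=|A|+1$, i.e.\ the metric dimension of $G[r]$ is $\beta-1$. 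The only genuine obstacle I anticipate is the second paragraph, i.e.\ checking that \textnu $(r)$ meets the $k=2$ side condition of Lemmas~\ref{withoutSep} and~\ref{minSepSet}; given that, the remaining reasoning is immediate from those lemmas and from the definition of $h$. As a sanity check the same number also follows from $\alpha$: revisiting the construction in the proof of Theorem~\ref{theorem1} with at most one gate-carrying child of $r$ gives metric dimension $\alpha-1$ when $\beta=\alpha$ and $\alpha-2$ when $\beta=\alpha-1$, both equal to $\beta-1$.
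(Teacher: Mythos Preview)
Your argument matches the paper's: both deduce that the metric dimension of $G[r]$ is $\beta-1$ by passing between minimum resolving sets and minimum \textnu $(r)$-resolving sets via Lemma~\ref{withoutSep}. The paper runs in the opposite direction---it starts from a minimum \textnu $(r)$-resolving set $B$, drops \textnu $(r)$ using Lemma~\ref{withoutSep}, and invokes Lemma~\ref{eq} for minimality---whereas you start from a minimum resolving set and use Lemma~\ref{minSepSet}; the two routes are interchangeable.

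Where you go further than the paper is in actually checking the $k=2$ side condition of Lemma~\ref{withoutSep}, which the paper simply asserts (``since the conditions of Lemma~\ref{withoutSep} are given''). Your check is almost complete, but the bridge case has a small hole: concluding that a path-shaped $V_i$ would make the bridge edge a leg edge requires a vertex of degree $\ge 3$ somewhere to serve as the leg's root. If $G$ itself is a path---both children of $r$ are bridges and no vertex of $G$ has degree $\ge 3$---there is no leg at all, the side hypothesis genuinely fails (a single endpoint resolves $P_n$ without meeting the far side), and neither Lemma~\ref{withoutSep} nor Lemma~\ref{minSepSet} applies. In that degenerate case the conclusion $\beta-1=1$ is immediate by inspection, so the theorem survives; the paper carries the same unstated gap, it just never surfaces because the assertion is not unpacked.
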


\begin{proof}
 Let $B$ be a minimum-\textnu $(r)$-resolving set for $G[r]$. Since the conditions of Lemma \ref{withoutSep} are given, we get that $B':= B  \setminus \{r\}$ is a resolving set for $G[r]$ and with Lemma \ref{eq} we get that $B'$ is a minimum resolving set for $G[r]$. Thus, if $h(r)=(\alpha,\beta)$, then $\beta-1$ is the metric dimension of $G[r]$.
\end{proof}

\section{Algorithm and Time Complexity} 
\label{AlgTimeComp}

Let $G=(V,E)$ be a connected undirected graph with $|V| = n$ and $|E| = m$. To compute a resolving set for $G$ we first compute the DEBC-tree $\overrightarrow{T} := (V_T,\overrightarrow{E}_T)$ for $G$. This can be done in $\mathcal{O}(n+m)$ with the help of any linear-time-algorithm for finding the biconnected components and bridges of $G$. Then we compute $h(c) = (\alpha_c, \beta_c)$ for every leaf $c$ with father $a$ in the DEBC-tree $\overrightarrow{T}$. We do this by checking every subset $W \subseteq V(G[c])$ if $W' := W \cup \{\text{\textnu }(a)\}$ is a resolving set for $G[c]$. We choose the size of the smallest set $W'$ for $\beta_{c}$ and the size of the smallest set $W'$, such that $a$ is not a $W'$-gate in $G[c]$ for $\alpha$. For this we need $\mathcal{O}(2^n \cdot n \cdot (n+m))$ steps, because we have $2^n$ subset and for each subset we test if it is a resolving set in time $n \cdot (n+m)$. Next we can compute the values $h$ for every inner node of $\overrightarrow{T}$. In Theorem \ref{theorem1} and \ref{theorem2} we showed that this can be done in $\mathcal{O}(1)$ and $\mathcal{O}(2^n \cdot n \cdot (n+m))$, respectively. Overall we have a running time in $\mathcal{O}(2^n \cdot n \cdot (n+m))$.\\ \null

\begin{definition}
 An undirected graph $G$ is {\em (minimum) $k$-EBC-bounded} for some positive integer $k$, if there is a (minimum) resolving set $R$ for $G$ such that every EBC of $G$ contains at most $k$ vertices of $R$. $R$ is called a {\em (minimum) $k$-EBC-bounded-resolving set} for $G$.\\
 Let $\mathcal{G}_k$ and $\mathcal{G}^{min}_k$ be the class of graphs that are $k$-EBC-bounded and minimum-$k$-EBC-bounded, respectively.\\
 A set of graphs $B$ is {\em (minimum) EBC-bounded}, if for every graph $G \in B$ there is a $k$ such that $G$ is (minimum) $k$-EBC-bounded. 
\end{definition}

\begin{corollary}
 The following problems can be solved in polynomial time for any fixed positive integer $k$:
 \begin{enumerate}
  \item Given an undirected graph $G$. Is $G \in \mathcal{G}_k$?
  \item Given a set $B$ of EBC-bounded graphs. Find the smallest integer $k'$ such that $G \in \mathcal{G}_{k'}$.
  \item Given an undirected graph $G \in \mathcal{G}_k$. Compute a minimum $k$-EBC-bounded-resolving set for $G$.
  \item Given an undirected graph $G \in \mathcal{G}^{min}_k$. Compute a minimum resolving set for $G$ and thus the metric dimension of $G$.

 \end{enumerate}
\end{corollary}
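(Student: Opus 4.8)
The plan is to re-execute the bottom-up computation behind Theorems~\ref{theorem1} and~\ref{theorem2} with two modifications. First, I redefine $h(v)=(\alpha,\beta)$ (Definition~\ref{h}) so that $\alpha$ and $\beta$ are the minimum sizes of a \emph{$k$-EBC-bounded} non-gate-\textnu$(\cdot)$-resolving set and of a \emph{$k$-EBC-bounded} \textnu$(\cdot)$-resolving set for $G[v]$ (the amalgamation vertex \textnu$(\cdot)$ being the one prescribed in Definition~\ref{h}), with the convention $h(v)=(\infty,\infty)$ when no such set exists; and I use the recursion of Theorem~\ref{theorem1} in its unsimplified form, $\alpha(a)=\big(\sum_i\alpha_i\big)-(d-1)$ and $\beta(a)=\min_{1\le j\le d}\big[\beta_j+\sum_{i\neq j}\alpha_i\big]-(d-1)$ for an a-node $a$ with $d$ children, so that the value $\infty$ propagates correctly (the simplified rule ``$\beta=\alpha$ or $\alpha-1$'' relies on $\alpha_i\le\beta_i+1$, which may break once the bound $k$ is imposed, as fixing a gate can overfill an EBC). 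Second, at every c-node $c$ whose component $G_{|{\cal V}(c)}$ is an EBC, instead of enumerating all $2^{|{\cal V}(c)|}$ subsets $W$ of ${\cal V}(c)$ I enumerate only those $W$ for which at most $k$ of its vertices survive in $R'_W$, i.e. $|R'_W\cap{\cal V}(c)|\le k$; for a fixed $k$ this is $\mathcal{O}(n^{k})$ candidates per EBC, while bridges contribute $\mathcal{O}(1)$ candidates and ordinary legs the closed values $h=(2,1)$, so the whole pass runs in time $\mathcal{O}(\mathrm{poly}(n,m)\cdot n^{k})$.

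Next I would verify that these modified recursions are correct, i.e. that the set-combining constructions in the proofs of Lemmas~\ref{maxOneGate},~\ref{withoutSep},~\ref{minSepSet},~\ref{thLemma} and of Theorems~\ref{theorem1},~\ref{theorem2} preserve $k$-EBC-boundedness. The key observations are: every EBC of $G[v]$ is either the component $G_{|{\cal V}(v)}$ itself (possible only when $v$ is a c-node representing an EBC, in which case it is controlled directly by the restricted enumeration) or lies entirely inside a single child subtree of $v$ (where it is controlled by induction); and the only vertex that these constructions ever add to, or remove from, a component shared by two EBCs is the separating amalgamation vertex \textnu$(\cdot)$, and they do so symmetrically --- it is deleted when passing to a plain resolving set and re-inserted when passing back to the \textnu$(\cdot)$-resolving set. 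Since each child value is already $k$-EBC-bounded \emph{with its own amalgamation vertex counted}, every EBC of $G[v]$ still holds at most $k$ resolving vertices after the combination, and the ``$\Leftarrow$'' halves of the lemmas only delete vertices and hence cannot overfill anything. Reading the proofs of Theorems~\ref{theorem1} and~\ref{theorem2} with this accounting shows that the computed $h(v)$ equal the redefined quantities; in particular $h(v)\neq(\infty,\infty)$ exactly when $G[v]$ admits a $k$-EBC-bounded \textnu$(\cdot)$-resolving set.

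A small adjustment is needed at the root. The metric dimension is read off as $\beta(r)-1$, i.e. from a \textnu$(r)$-resolving set from which the ``phantom'' vertex \textnu$(r)$ is deleted at the very end (Lemmas~\ref{withoutSep},~\ref{minSepSet},~\ref{eq}); but \textnu$(r)$ lies in every top-level component ${\cal V}(c_1),\ldots,{\cal V}(c_d)$, so whenever such a component is an EBC, re-attaching \textnu$(r)$ to a $k$-EBC-bounded resolving set can raise that EBC to $k+1$ resolving vertices. I would therefore compute, for each child $c_i$ of $r$, a relaxed pair in which the single component ${\cal V}(c_i)$ (when it is an EBC) is allowed $k+1$ resolving vertices while everything strictly below it still obeys the bound $k$ --- this is exactly the computation of Theorem~\ref{theorem2} with the enumeration threshold at ${\cal V}(c_i)$ raised by one --- and then combine these pairs by the recursion of Theorem~\ref{theorem1}; now $\beta(r)-1$ is the size of a minimum $k$-EBC-bounded resolving set for $G$, and it equals $\infty$ precisely when $G\notin\mathcal{G}_k$, which settles item~(1). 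Item~(2) follows by running the decision of item~(1) for $k'=1,2,3,\ldots$ (or by binary search on $\{1,\ldots,n\}$, since always $k'(G)\le n$; for a finite family $B$ the answer is $\max_{G\in B}k'(G)$). Item~(3) follows by the standard device of storing, alongside each $h(v)$, an actual witnessing set and reconstructing it top-down along $\overrightarrow{T}$, which adds only linear overhead. Item~(4) follows because, by definition of $\mathcal{G}^{min}_k$, a minimum $k$-EBC-bounded resolving set of such a graph is already a minimum resolving set, so the set returned by item~(3) is one and its cardinality is the metric dimension.

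The step I expect to be the main obstacle is making the accounting of the second paragraph fully rigorous: specifying, for each vertex of a combined resolving set, which EBC(s) it is charged against --- amalgamation vertices may belong to several EBCs at once --- and verifying that none of the unions, deletions and re-insertions in the (now $\infty$-valued) recursions, nor the $k+1$ relaxation at the root, ever pushes an EBC over its budget. Everything else is a routine re-run of the existing proofs with $\infty$-arithmetic and an $\mathcal{O}(n^{k})$ in place of an $\mathcal{O}(2^n)$ enumeration.
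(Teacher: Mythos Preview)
Your approach is essentially the paper's: restrict the per-EBC subset enumeration in the bottom-up pass to candidates of size at most $k$, yielding $\mathcal{O}(n^{k}\cdot n\cdot(n+m))$ in place of $\mathcal{O}(2^{n}\cdot n\cdot(n+m))$, and handle item~(2) by iterating on $k'$. Your write-up is in fact more careful than the paper's brief sketch --- the $\infty$-propagation through an unsimplified recursion, the per-EBC charging of amalgamation vertices, and the $k{+}1$ relaxation at the root are genuine technical points the paper does not spell out --- but the underlying method is the same.
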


To solve these problems we use our algorithm with slight modifications.\\
Instead of checking every subset $W'$ if it is resolving, we do the following:
For the problems 1., 3. and 4. we only test those subsets with at most $k$ vertices. 
For the problem 2. we run our algorithm for $k=1$ and increase $k$ successively by one until we get a resolving set. 
F
By doing these modifications the running time of our algorithm can be bounded by $\mathcal{O}(n^{k} \cdot n \cdot (n+m))$.

Obviously it holds that $\mathcal{G}'_k \subseteq \mathcal{G}_k$. Vice versa it holds that for all $k$ there is a graph $G \in \mathcal{G}_2$ such that $G \notin \mathcal{G}^{min}_k$, see Figure \ref{k_min}. Moreover, the complexity of the following problems remain open:
\begin{enumerate}
 \item Given an undirected graph $G$ a fixed positive integer $k$. Is $G \in \mathcal{G}^{min}_k$?
 \item Given an undirected graph $G \in \mathcal{G}'$. Find the smallest integer $k'$ such that $G \in \mathcal{G}_{k'}$. 
\end{enumerate}

\begin{figure}
 \centering
 \includegraphics[height = 0.3 \textheight]{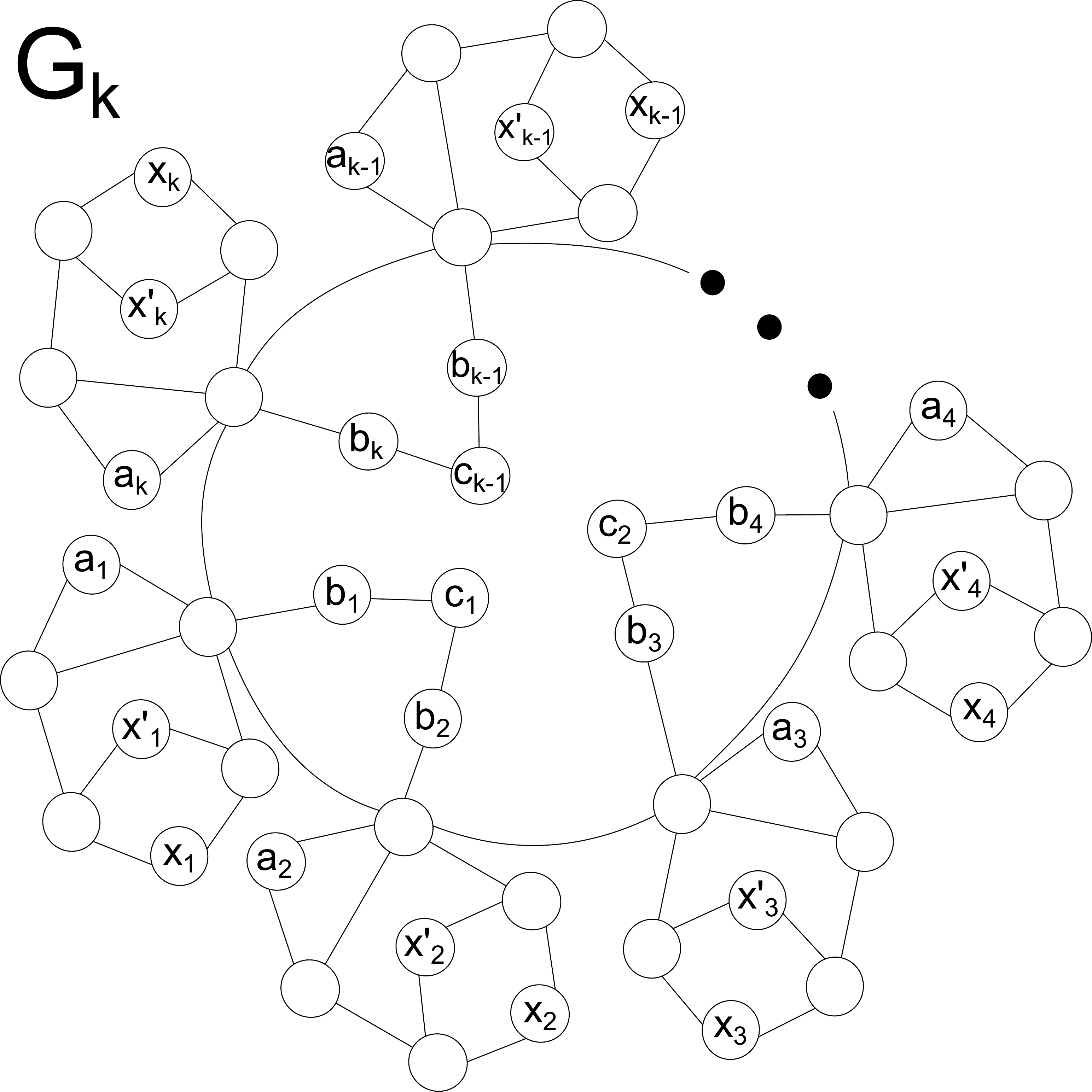}
 \caption{\label{k_min} The Figure shows graph $G_k$ with $k+1$ EBCs. Graph $G_k$ is in $\mathcal{G}_2$ for all $k$. Every resolving set for $G_k$ contains one of the vertices $x_i, x'_i$, $1 \leq i \leq k$, since there is no other vertex that can resolve them. The only vertex pairs that still need to be resolved are pairs $a_i, b_i$. To solve them it suffices to choose vertices $a_i$ as resolving vertices. By doing so we get a 2-EBC-bounded-resolving-set with $2 \cdot k$ vertices (every EBC except of one contains two resolving vertices) and there is no other 2-EBC-bounded-resolving-set with less vertices. Nevertheless a minimum resolving set contains less vertices. By choosing vertices $c_i$ instead of $a_i$ as resolving vertices one gets a minimum resolving set with $\frac{3}{2} \cdot k$ vertices. In this case one of the EBCs contains $\frac{1}{2} \cdot k$ resolving vertices and the others contain one vertex.}
\end{figure}

The following problem, however, still remains NP-complete. The proof can be found in the full version of this paper.

\newpage
\problemDecision{\mdSliceBC}{
  An undirected graph $G=(V,E)$ and a positive integer $r\in\mathbb{N}$ such that there is a minimum resolving set $R\subseteq V$ for $G$ that contains at most $k$ vertices from each biconnected component of $G$.
}{
  Is the metric dimension of $G$ at most $r$?
}  

\begin{theorem}
 {\sc \mdSliceBC} is NP-complete for all positive integers $k\in\mathbb{N}$.
\end{theorem}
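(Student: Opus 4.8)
The plan is to establish membership in NP and then give a polynomial many‑one reduction from a standard NP‑complete problem; I will use \textsc{3-Sat}, although \textsc{Vertex Cover} would do as well. Membership in NP is immediate: a resolving set $R$ with $|R|\le r$ is a certificate, verified in polynomial time by a breadth‑first search from each $w\in R$ followed by a check that the resulting distance vectors are pairwise distinct. The promise that $G$ admits a minimum resolving set which is $k$-bounded on every biconnected component does not interfere, since the question only asks whether the metric dimension is at most $r$.

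For hardness, given a \textsc{3-Sat} instance $\varphi$ with variables $x_1,\dots,x_n$ and clauses $C_1,\dots,C_m$, I would construct a graph $G_\varphi$ built from a bounded number of \emph{large} biconnected cores joined by bridges and decorated with hooked and ordinary legs. Each core is taken from a family of $2$-connected graphs that have metric dimension at most $k$ even when attached to the remainder of the graph at a single amalgamation vertex; large grids attached at a corner (metric dimension $2$, one interior resolving vertex sufficing once there is any resolver behind the attachment point) are a convenient choice and already cover every $k\ge 1$. Crucially, the cores themselves carry almost no information: the formula is encoded through the \emph{hooked legs} hung from selected core vertices. Their path lengths are tuned so that, with respect to the ``cheap'' resolvers available outside a leg (the $\le k$ core vertices together with whatever sits on other legs), large prescribed families of leaf vertices become pairwise unresolved; separating the members of such a collision family forces dedicated resolving vertices onto the hooked legs. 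One family of hooked legs realises, for each $x_i$, a variable gadget whose two cheapest resolved configurations correspond to the two truth values of $x_i$; another family realises, for each $C_j$, a clause gadget containing a critical vertex pair that can be separated without an extra dedicated resolver precisely when at least one literal of $C_j$ has been set so that the separating effect of its variable gadget reaches the clause gadget — which one arranges by balancing the relevant leg lengths against the distances along the connecting bridges. Finally $r$ is set to the total size of an honest, assignment‑consistent selection.

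Correctness then splits in the usual way. Given a satisfying assignment, the honest selection it determines has size exactly $r$, and the path‑length arithmetic shows that every critical pair of every clause gadget is separated (each clause has a true literal) while all remaining pairs — within a core, along a leg, or across two components — are separated by the core resolvers and the gate behaviour at the amalgamation vertices (Lemmas~\ref{ancorfreeSet}--\ref{maxOneGate}). Conversely, a resolving set of size at most $r$ must, by a counting argument at the amalgamation vertices (Lemma~\ref{ancorfreeSet}) together with the forced‑vertex facts for twin leaves and for the collision families, already spend the minimum number of vertices inside every gadget; hence it is honest, induces a well‑defined assignment, and separates every clause gadget only if that assignment satisfies every clause. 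It remains to check the promise: the honest size‑$r$ resolving set puts at most $k$ vertices into each core and all of its other vertices onto legs and bridges, which lie in no biconnected component; so it is a minimum resolving set that is $k$-bounded on every biconnected component, and $(G_\varphi,r)$ is a legal instance. Since $G_\varphi$ is constructed in polynomial time, NP‑hardness follows.

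The main obstacle is the gadget engineering. One needs $2$-connected cores that are at the same time cheap — at most $k$ interior resolving vertices, uniformly in the unbounded core size and for every $k\ge1$ including $k=1$ — and rich enough that $G_\varphi$ is not already a cactus or an EBC‑bounded instance, for which the algorithm of Section~\ref{AlgTimeComp} would give a polynomial‑time solution; and one needs to choose the hooked‑leg lengths so that each collision family behaves exactly as designed (each cheap outside resolver distinguishes only a controlled number of its members, the variable configurations are exactly two and mutually exclusive, and the long‑range signalling to the clause gadgets neither overshoots nor falls short). The most delicate part, and the reason the full proof is relegated to the long version, is ruling out ``dishonest'' resolving sets that beat $r$ by exploiting an accidental distance coincidence between two gadgets.
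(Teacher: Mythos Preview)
Your proposal is not a proof but a research programme: you describe a reduction \emph{schema} and then explicitly list the obstacles (the collision-family arithmetic, the exclusion of dishonest resolving sets, the tension between ``cheap'' cores and avoiding the EBC-bounded polynomial case) without resolving any of them. None of the crucial claims --- that variable gadgets built from hooked legs on a grid admit exactly two minimum configurations, that clause pairs are separated precisely under satisfaction, that no smaller dishonest set exists --- is actually established. As written this is not a valid proof; it is a sketch whose hard parts are deferred.

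More importantly, the paper's route avoids every one of these difficulties. Instead of designing a new reduction, the paper reuses the Khuller--Raghavachari--Rosenfeld reduction from \textsc{3-Sat} verbatim and makes a one-line local modification to each variable gadget: it appends extra pendant vertices $a_i^3,a_i^4,b_i^3,b_i^4$ so that the forced resolver that originally had to come from $\{a_i^1,a_i^2,b_i^1,b_i^2\}$ inside the biconnected part can instead be taken from the new pendants, which lie on legs and hence in no biconnected component. The clause resolvers $c_j^4,c_j^5$ are already degree-one pendants in the original construction. Thus the known minimum resolving set of size $n+m$ survives, but now every biconnected component contains \emph{zero} resolving vertices, so the promise is met for every $k\ge 1$ at once. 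Correctness of the reduction is inherited wholesale from \cite{KRR96}; there is no new distance arithmetic, no collision families, no dishonest-set analysis.

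The lesson is that the promise in the problem statement concerns only biconnected components, not extended biconnected components; since legs lie outside every biconnected component, one can simply push the forced resolvers onto legs. Your elaborate grid-plus-hooked-leg machinery is unnecessary, and the parts you flag as ``delicate'' do not arise at all.
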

  
\begin{proof} 
We use a slight modification of the NP-completeness proof of \sc Metric Dimension \rm from Khuller et al.\ in \cite{KRR96}, where they reduce from \sc 3-SAT. \rm Let $F$ be a \sc 3-SAT \rm instance with $n$ variables and $m$ clauses. For each variable $x_i$ they construct the gadget in Figure \ref{varGad} and for each clause $c_j$ the gadget in Figure \ref{claGad}. The nodes $T_i$ and $F_i$ are the "True" and "False" ends and an variable gadget is connected to the rest of the graph only through these two nodes.

 \begin{figure}[]
 \centering
    \begin{subfigure}[t]{0.3\textwidth}
        \includegraphics[width=\textwidth]{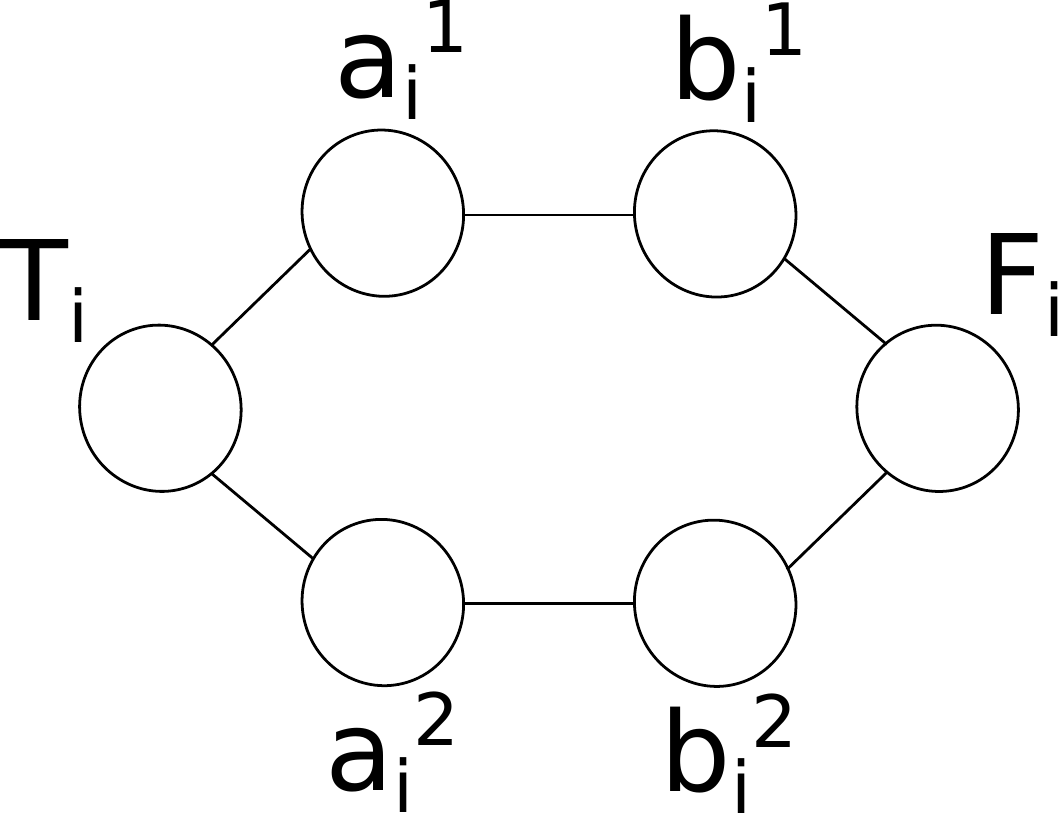}
    \caption{ 
      \label{varGad} Gadget for a variable.}
    \end{subfigure}
    \quad 
     \begin{subfigure}[t]{0.3\textwidth}
        \includegraphics[width=\textwidth]{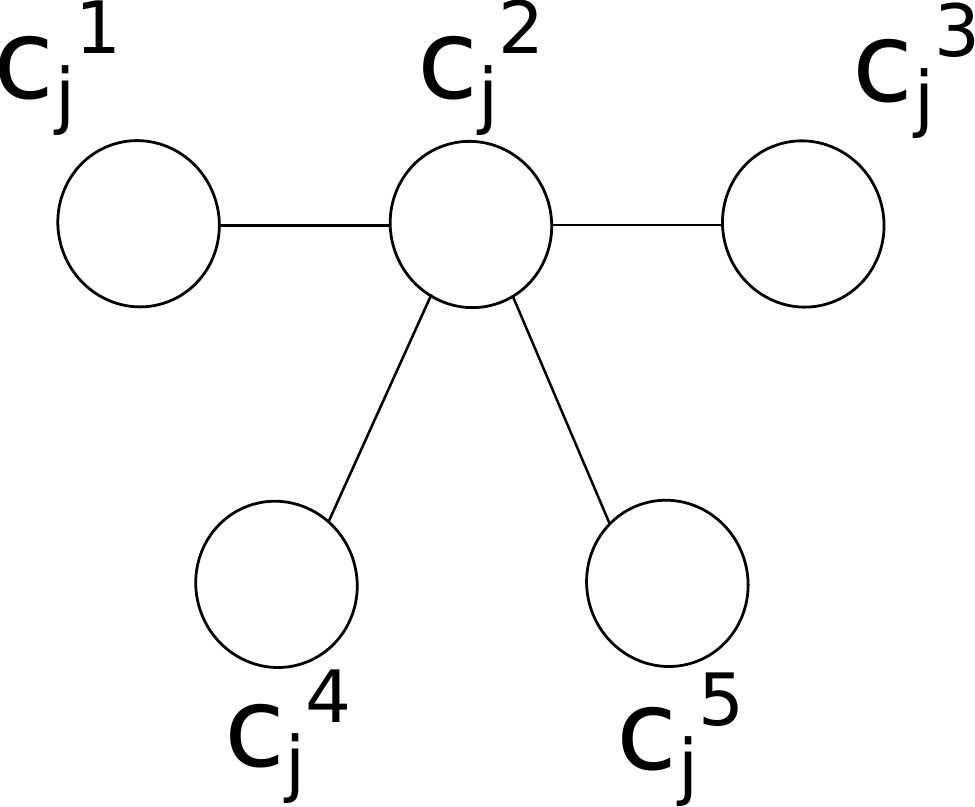}
    \caption{ 
      \label{claGad} Gadget for a clause.}
    \end{subfigure} 
    \quad
     \begin{subfigure}[t]{0.3\textwidth}
        \includegraphics[width=\textwidth]{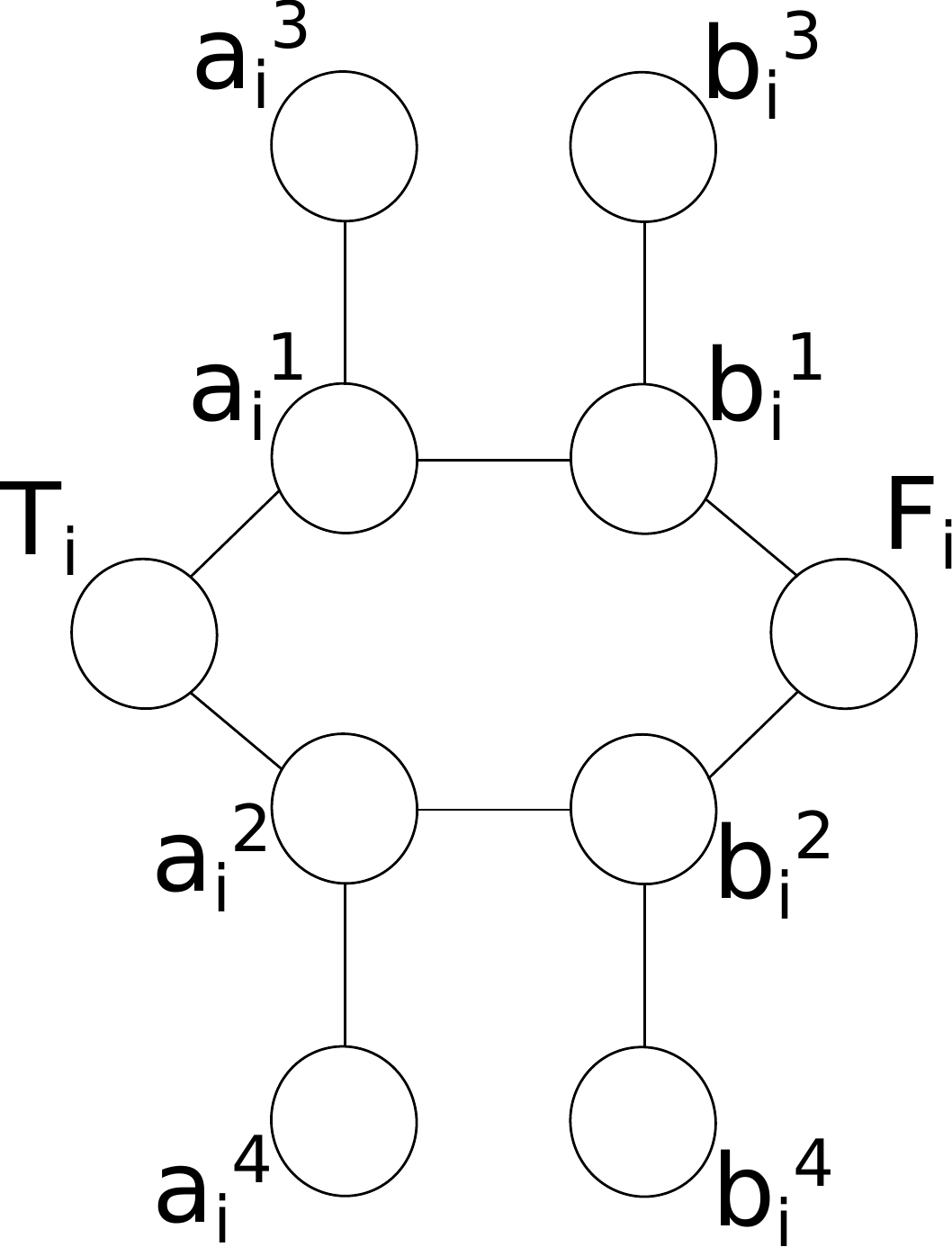}
    \caption{
      \label{varGadMod} Modified variable gadget.}
    \end{subfigure} 
     \caption{}
\end{figure}  

\begin{figure}[] 
\centering
\includegraphics[width = \textwidth]{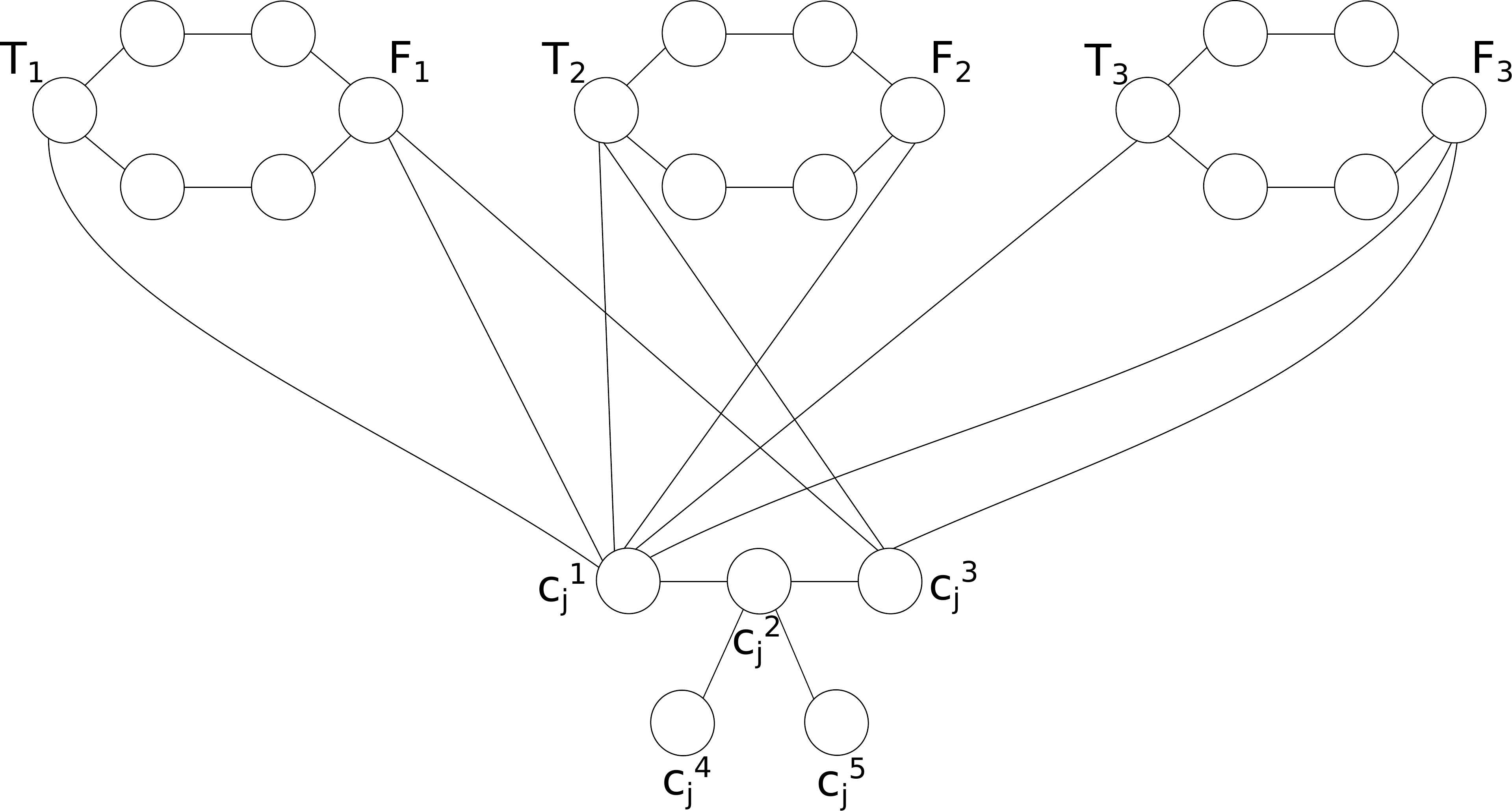}
\caption{\label{variableClauseExample} Clause $C_j = x_1 \vee \bar x_2 \vee x_3$.}
\end{figure}

If $x_i$ is a positive literal in $c_j$ the edges $\{T_i,c_j^1\}$, $\{F_i,c_j^1\}$ and $\{F_i,c_j^3\}$ are added. If $x_i$ is a negative literal in $c_j$ the edges $\{T_i,c_j^1\}$, $\{F_i,c_j^1\}$ and $\{T_i,c_j^3\}$ are added. If $x_i$ does not occur in $c_j$ the edges $\{T_i,c_j^1\}$, $\{F_i,c_j^1\}$, $\{T_i,c_j^3\}$ and $\{F_i,c_j^3\}$ are added, see Figure \ref{variableClauseExample} for an example. In the following they argue that a formula $F$ is satisfiable if and only if the metric dimension of the corresponding graph $G$ is $n+m$. To show this they observe that every resolving set for $G$ contains at least one of the nodes $a_i^1, a_i^2, b_i^1, b_i^2$ and at least one of the nodes $c_j^4, c_j^5$, $1 \leq i \leq n$, $1 \leq j \leq m$.

We now modify the variable gadget as shown in Figure \ref{varGadMod} so that at least one of the nodes $a_i^1, a_i^2, a_i^3, a_i^4,  b_i^1, b_i^2, b_i^3, b_i^4$ must be in every resolving set. Note that whenever one of the nodes $a_i^1, a_i^2, b_i^1, b_i^2$ is in a resolving set it can also be substituted by one of the nodes $a_i^3, a_i^4, b_i^3, b_i^4$.
\end{proof} 

\section{Conclusion}
We have shown that {\sc Metric Dimension} can be solved in polynomial time on graphs having a minimum resolving set with a bounded number of resolving vertices in every EBC. Even more the algorithm even can compute a minimum resolving set in polynomial time under these restrictions. However, the problem remains NP-complete for graphs having a minimum resolving set with a bounded number of vertices in every biconnected component. This shows that the extended biconnected components can not simply be downsized further. A next step can be to investigate how this algorithm can be modified to solve other variants of the Metric Dimension problem, such as the Fault-tolerant Metric Dimension, the Local Metric Dimension, and the Strong Metric Dimension, etc. One version we are currently working on is the Fault-tolerant Metric Dimension as presented in \cite{hernando2008fault}. The two open problems discussed at the end of Section \ref{AlgTimeComp} are also going to be investigated.
 
 \section{References}

 \bibliographystyle{plain} 
 \bibliography{literature}

\begin{thebibliography}{10}

\bibitem{belmonte2017metric}
R{\'e}my Belmonte, Fedor~V Fomin, Petr~A Golovach, and MS~Ramanujan.
\newblock Metric dimension of bounded tree-length graphs.
\newblock {\em SIAM Journal on Discrete Mathematics}, 31(2):1217--1243, 2017.

\bibitem{CGH08}
G.G. Chappell, J.G. Gimbel, and C.~Hartman.
\newblock Bounds on the metric and partition dimensions of a graph.
\newblock {\em Ars Combinatoria}, 88, 2008.

\bibitem{CEJO00}
G.~Chartrand, L.~Eroh, M.A. Johnson, and O.~Oellermann.
\newblock Resolvability in graphs and the metric dimension of a graph.
\newblock {\em Discrete Applied Mathematics}, 105(1-3):99--113, 2000.

\bibitem{CPZ00}
G.~Chartrand, C.~Poisson, and P.~Zhang.
\newblock Resolvability and the upper dimension of graphs.
\newblock {\em Computers and Mathematics with Applications}, 39(12):19--28,
  2000.

\bibitem{DPSL12}
J.~D\'{\i}az, O.~Pottonen, M.J. Serna, and E.J. van Leeuwen.
\newblock On the complexity of metric dimension.
\newblock In L.~Epstein and P.~Ferragina, editors, {\em ESA}, volume 7501 of
  {\em Lecture Notes in Computer Science}, pages 419--430. Springer, 2012.

\bibitem{epstein2015weighted}
Leah Epstein, Asaf Levin, and Gerhard~J Woeginger.
\newblock The (weighted) metric dimension of graphs: hard and easy cases.
\newblock {\em Algorithmica}, 72(4):1130--1171, 2015.

\bibitem{estrada2013k}
Alejandro Estrada-Moreno, Juan~A Rodr{\'\i}guez-Vel{\'a}zquez, and Ismael~G
  Yero.
\newblock The k-metric dimension of a graph.
\newblock {\em arXiv preprint arXiv:1312.6840}, 2013.

\bibitem{fernau2015computing}
Henning Fernau, Pinar Heggernes, Pim van't Hof, Daniel Meister, and Reza Saei.
\newblock Computing the metric dimension for chain graphs.
\newblock {\em Information Processing Letters}, 115(9):671--676, 2015.

\bibitem{foucaud2015algorithms}
Florent Foucaud, George~B Mertzios, Reza Naserasr, Aline Parreau, and Petru
  Valicov.
\newblock Algorithms and complexity for metric dimension and
  location-domination on interval and permutation graphs.
\newblock In {\em International Workshop on Graph-Theoretic Concepts in
  Computer Science}, pages 456--471. Springer, 2015.

\bibitem{foucaud2017identification}
Florent Foucaud, George~B Mertzios, Reza Naserasr, Aline Parreau, and Petru
  Valicov.
\newblock Identification, location--domination and metric dimension on interval
  and permutation graphs. i. bounds.
\newblock {\em Theoretical Computer Science}, 668:43--58, 2017.

\bibitem{GJ79}
M.R. Garey and D.S. Johnson.
\newblock {\em Computers and Intractability: A Guide to the Theory of
  NP-Completeness}.
\newblock W.H.~Freeman, 1979.

\bibitem{HM76}
F.~Harary and R.A. Melter.
\newblock On the metric dimension of a graph.
\newblock {\em Ars Combinatoria}, 2:191--195, 1976.

\bibitem{hartung2013parameterized}
Sepp Hartung and Andr{\'e} Nichterlein.
\newblock On the parameterized and approximation hardness of metric dimension.
\newblock In {\em Computational Complexity (CCC), 2013 IEEE Conference on},
  pages 266--276. IEEE, 2013.

\bibitem{HSV12}
M.~Hauptmann, R.~Schmied, and C.~Viehmann.
\newblock Approximation complexity of metric dimension problem.
\newblock {\em Journal of Discrete Algorithms}, 14:214--222, 2012.

\bibitem{hernando2008fault}
Carmen Hernando, Merc{\'e} Mora, Peter~J Slater, and David~R Wood.
\newblock Fault-tolerant metric dimension of graphs.
\newblock {\em Convexity in discrete structures}, 5:81--85, 2008.

\bibitem{HMPSCP05}
M.C. Hernando, M.~Mora, I.M. Pelayo, C.~Seara, J.~C{\'a}ceres, and M.L.
  Puertas.
\newblock On the metric dimension of some families of graphs.
\newblock {\em Electronic Notes in Discrete Mathematics}, 22:129--133, 2005.

\bibitem{HW12}
S.~Hoffmann and E.~Wanke.
\newblock Metric dimension for gabriel unit disk graphs is {NP}-complete.
\newblock In A.~Bar-Noy and M.M. Halld{\'o}rsson, editors, {\em ALGOSENSORS},
  volume 7718 of {\em Lecture Notes in Computer Science}, pages 90--92.
  Springer, 2012.

\bibitem{hoffmann2016linear}
Stefan Hoffmann, Alina Elterman, and Egon Wanke.
\newblock A linear time algorithm for metric dimension of cactus block graphs.
\newblock {\em Theoretical Computer Science}, 630:43--62, 2016.

\bibitem{IBSS10}
H.~Iswadi, E.~Baskoro, A.N.M. Salman, and R.~Simanjuntak.
\newblock The metric dimension of amalgamation of cycles.
\newblock {\em Far East Journal of Mathematical Sciences (FJMS)}, 41(1):19--31,
  2010.

\bibitem{KRR96}
Samir Khuller, Balaji Raghavachari, and Azriel Rosenfeld.
\newblock Landmarks in graphs.
\newblock {\em Discrete Applied Mathematics}, 70:217--229, 1996.

\bibitem{MT84}
R.A. Melter and I.~Tomescu.
\newblock Metric bases in digital geometry.
\newblock {\em Computer Vision, Graphics, and Image Processing},
  25(1):113--121, 1984.

\bibitem{oellermann2007strong}
Ortrud~R Oellermann and Joel Peters-Fransen.
\newblock The strong metric dimension of graphs and digraphs.
\newblock {\em Discrete Applied Mathematics}, 155(3):356--364, 2007.

\bibitem{SBSSB11}
S.W. Saputro, E.T. Baskoro, A.N.M. Salman, D.~Suprijanto, and A.M. Baca.
\newblock The metric dimension of regular bipartite graphs.
\newblock {\em arXiv/1101.3624}, 2011.

\bibitem{ST04}
A.~Seb{\"o} and E.~Tannier.
\newblock On metric generators of graphs.
\newblock {\em Mathematics of Operations Research}, 29(2):383--393, 2004.

\bibitem{Sla75}
P.~Slater.
\newblock Leaves of trees.
\newblock {\em Congressum Numerantium}, 14:549--559, 1975.

\end{thebibliography}

\end{document}